\documentclass[journal,comsoc]{IEEEtran}
%%%%%%%%%%%%%%%%%%%%%%%%%%%%%%%%%%%%%%%%%%%%%%%%%%%%%%%%%%%%%%%%%%%%%%%%%%%%%%%%%%%%%%%%%%%%%%%%%%%%%%%%%%%%%%%%%%%%%%%%%%%%%%%%%%%%%%%%%%%%%%%%%%%%%%%%%%%%%%%%%%%%%%%%%%%%%%%%%%%%%%%%%%%%%%%%%%%%%%%%%%%%%%%%%%%%%%%%%%%%%%%%%%%%%%%%%%%%%%%%%%%%%%%%%%%%
\usepackage{mathrsfs}
\usepackage{amsmath}
\usepackage{amssymb}
\usepackage{amsthm}
\usepackage[noadjust]{cite}
\usepackage{cite}
\usepackage{eurosym}
\usepackage{graphicx}
\usepackage{epstopdf}
\usepackage{tikz,lipsum}
\usepackage{subcaption}
\usepackage{float}
\usepackage[T1]{fontenc}
\usepackage{ragged2e}
\usepackage{multicol}
\usepackage{listings}
\usepackage{times}
\usepackage{array}
\usepackage[left=1.23cm,top=1.23cm,right=1.23cm,bottom=1.23cm]{geometry}
\usepackage{enumitem}
\usepackage{etoolbox}

\preto{\section}{\setcounter{subsection}{0}}
\setcounter{MaxMatrixCols}{10}

\newcounter{storeeqcounter}

\ifCLASSINFOpdf
\else
\fi

\bstctlcite{IEEEexample:BSTcontrol}
\ifCLASSOPTIONcompsoc
\else
\fi
\ifCLASSOPTIONcompsoc
\else
\fi
\ifCLASSINFOpdf
\else
\fi
\hyphenation{}

\newtheorem{corollary}{Corollary}

\newtheorem{proposition}{Proposition}
\newtheorem{remark}{Remark}

\setcounter{storeeqcounter}{\value{equation}}
% \usepackage[backend=biber,style=numeric]{biblatex}
% \addbibresource{references.bib}
% \AtEveryBibitem{%
%   \ifboolexpr{
%     test {\ifentrytype{article}}
%     and
%     test {\iffieldequalstr{note}{newref}}
%   }
%   {\color{blue}}% apply color to the entire item
%   {}%
% }
%\input{tcilatex}
\begin{document}
\title{\color{black}On the Secrecy of RIS-aided THz Wireless System subject to $\alpha-\mu$ fading with Pointing Errors}% with of a Decode-and-Forward Relaying}
\author[]{\color{black}Faissal~El~Bouanani, %\IEEEmembership{\color{black}Senior~Member,~IEEE}, 
Elmehdi Illi, %\IEEEmembership{\color{black}Member,~IEEE}, 
Marwa Qaraqe, %,~%
%\IEEEmembership{\color{black}Senior~Member,~IEEE}, 
 and Osamah Badarneh %,~%
%\IEEEmembership{\color{black}Senior~Member,~IEEE}
\thanks{\color{black}F. El Bouanani is with ENSIAS, Mohammed V University in Rabat, Morocco (e-mail:f.elbouanani@um5s.net.ma).}
\thanks{\color{black}E. Illi and M. Qaraqe are with the College of Science and Engineering, Hamad Bin Khalifa University, Doha, Qatar (e-mails: elmehdi.illi@ieee.org,mqaraqe@hbku.edu.qa)}
\thanks{Osamah Badarneh is with the Department of Electrical and Communication. Engineering, German Jordanian University, Madaba 11180, Jordan (e-mail: Osamah.Badarneh@gju.edu.jo)}
}
\maketitle
\begin{abstract}
\color{black}The study examines the secrecy outage probability (SOP) and intercept probability (IP) of a reflecting intelligent surface (RIS)-enabled THz wireless network experiencing $\alpha-\mu$ fading with pointing errors. Specifically, the base station (BS) sends information to a legitimate user $\ell$ via the RIS while an eavesdropper $e$ tries to overhear the conversation. Furthermore, receive nodes are equipped with a single antenna, and the RIS phase shifts were selected to boost the SNR at node $\ell$. Elementary functions are used to \textcolor{black}{accurately approximate} the statistical features of channel gain in BS-$\ell$ and BS-$e$ links, leading to SOP and IP \textcolor{black}{approximate and asymptotic} expressions. Monte Carlo simulation validates all analytical findings for different system parameters' values.
\end{abstract}
\begin{IEEEkeywords}
\color{black} THz communication, reconfigurable intelligent surfaces, secrecy outage probability, alternative Simpson's rule, Gauss–Laguerre quadrature.
\color{black}
\end{IEEEkeywords}
\IEEEpeerreviewmaketitle
\vspace{-0.25cm}\section{Introduction}
Terahertz (THz) communication technology has been gaining significant interest recently as an advocated means for achieving high data rates (in the order of Tbps) for the next generations' networks. Such technology relies on transmitting data in the $0.1-10$ THz frequency band using high-directivity beams \cite{akyildiz}. However, despite its merits, several impairments limit its performance, such as multipath fading, pointing errors (PE), molecular absorption, and atmospheric \textcolor{black}{attenuation. On top of that, hardware limitations, such as high power consumption, limited output power of THz sources, and costly fabrication of transceivers and RIS elements, remain major barriers.} From another front, reconfigurable intelligent surfaces (RIS) technology has been attracting notable interest as an enabler for significantly boosting networks' spectral efficiency and communication reliability. RIS is based on intelligent signal wave reflection through planar metasurfaces by optimally configuring the various unit reflective elements' phase shifts for signal beam steering to the desired user or area. Consequently, the amalgamation of RIS and THz technologies is expected to play a pivotal role in shaping the next generation's high data-rate networks \cite{basar}.

Physical layer security (PLS) has been widely advocated as a means for achieving keyless theoretically secure transmissions by utilizing wireless channel randomness, along with physical layer transmission schemes and channel coding. In this regard, RIS yields remarkable potential in achieving robust PLS schemes by properly beam steering the information signal towards the legitimate receiver or beamforming a noisy signal to the eavesdropper \cite{ILLI2024106}. \textcolor{black}{Although RIS enhances directionality and reduces signal leakage toward unintended users, secrecy remains a concern in THz systems due to their high sensitivity to blockage and misalignment, where even slight beam deviations or PE can unintentionally expose the signal to eavesdroppers}. Therefore, it is crucial to provide an analytical framework for quantifying the secrecy performance of RIS-aided THz wireless networks, considering THz channels' impairments. \textcolor{black}{It's important to note that compared to mmWave systems, RIS-assisted THz networks offer enhanced scalability, energy efficiency, and coverage, especially in dense or indoor environments, due to THz signals' ultra-wide bandwidth and high spatial resolution, coupled with RIS’s ability to dynamically mitigate severe path loss and support directional secure links with minimal hardware complexity. Nonetheless, RIS-assisted THz systems need precise CSI and rapid reconfigurability, making real-time deployment difficult. Research into CMOS-based THz circuits, low-cost metasurfaces, and intelligent beam tracking should overcome these restrictions and make THz communication more realistic.}
 
\textcolor{black}{Recently, RIS has been integrated with advanced communication paradigms like RIS-assisted Non-orthogonal multiple access (NOMA) with on-off control for secrecy enhancement \cite{10103182} and active simultaneously transmitting and reflecting surfaces to support NOMA frameworks \cite{10453463}, showing its versatility and evolving potential in secure wireless systems. On the other hand, various} works were reported in the performance assessment of the THz-based networks \textcolor{black}{from} either a reliability or security perspective. For instance, the authors in \cite{8610080}-\textcolor{black}{\cite{10159392}} inspected the reliability performance of THz communication systems, leveraging the $\alpha$-$\mu$ distributed fading model and Rayleigh distributed PE. In contrast, the authors of \cite{Chapala2021ExactAO} extended the analysis to non-identically distributed channels over both hops. Likewise, \cite{10159392} dealt with a RIS-aided THz network's reliability and energy efficiency analysis. From PLS's point of view, the work in \cite{9987659} provided an approximate ergodic secrecy capacity analysis of a THz-based downlink satellite-ground communication system. Furthermore, other work, such as \cite{ref1,ref2}, tackled optimizing the RIS phase shifts to maximize the THz networks' secrecy capacity using convex optimization tools.

The works above were mostly restricted to the reliability analysis of RIS-assisted THz networks. Furthermore, other work, such as \cite{9987659}, employed an approximate expression for the ergodic secrecy capacity (ESC), subject to the Gamma-Gamma turbulence fading model with PE, which exhibits some gaps with respect to the exact ESC value. Motivated by the above, this paper proposes an accurate analytical secrecy evaluation framework for RIS-assisted THz wireless communication \textcolor{black}{systems} (WCS) \textcolor{black}{with perfect channel state information (CSI)}, subject to the various THz channel impairments. Particularly, tightly approximate \textcolor{black}{and asymptotic} expressions for the secrecy outage probability (SOP) and intercept probability (IP) for RIS-aided THz WCS, experiencing $\alpha-\mu$ turbulence-induced fading with PE, are derived.\vspace{-0.25cm} 	 
  \color{black}
\section{\textcolor{black}{System model}}
  \color{black}
We consider \textcolor{black}{an} RIS-aided THz WCS subject
to $\alpha-\mu$ turbulence fading with PE misalignment
at \textcolor{black}{the} receivers. A source $\mathbf{S}$ sends signals in the THz band to
a legitimate user $\ell$ while an eavesdropper $e$ attempts to overhear the exchange. Furthermore, end users are equipped with single antennas.
Yet, the received signal may be represented as follows\vspace{-0.15cm}
\begin{equation}
y_{x}=\sqrt{\mathcal{P}}h_{l,x}h_{x}s+n_{x};x\in\{\ell,e\},\label{receiv}
\end{equation}
where $\mathcal{P}$ is the transmit power, $s$ as the transmitted signal with $\mathbb{E}\left[\left\vert s\right\vert ^{2}\right]=1$$,\mathbb{E}[\cdot]$
as the expectation operator, $h_{l,x}$ as the path
gain from $\mathbf{S}$ to $x$,
$h_{x}$ as the channel gain with PE, and $n_{x}\sim\mathcal{CN}\left(0,\sigma_{x}^{2}\right)$ is the
additive white Gaussian noise (AWGN) at the node $x$. Thus, the SNR at node $x$ can be expressed as
\begin{equation}
\gamma_{x}=\frac{\mathcal{P}h_{l,x}^{2}\left\vert h_{x}\right\vert ^{2}}{\sigma_{x}^{2}},\label{eq:SNR}
\end{equation}\vspace{-.25cm}
with CDF\vspace{-0.15cm}
\begin{equation}
    F_{\gamma_{x}}(x)=F_{\left|h_{x}\right|^{2}}\left(\frac{z\sigma_{x}^{2}}{\mathcal{P}h_{l,x}^{2}}\right),z\ge 0.\label{CDFgamma}
\end{equation}\vspace{-0.65cm}
\subsection{Path gain}
The path gain $h_{l,x}$ is a constant that depends on the node $x$'s location and the system's parameters in terms of FRIIS and molecular absorption terms as \cite{8610080} $h_{l,x}=h_{l,x}^{(F)}h_{l,x}^{(a)},$ where $h_{l}^{(F)}$, in the case of RIS-assisted WCS, can be modeled in the near-field broadcasting as  $h_{l,x}^{(F)}=\frac{\lambda\sqrt{G_{t}G_{r,x}}}{4\pi d_{x}}$ \cite{9206044}, with $\lambda$ as the wavelength, $G_{t}$ and $G_{r,x}$ as the transmit
and receive antenna gain, $d_{x}$ as the distance distance $\mathbf{S}-x$,
i.e., $d_{x}=d_{1}+d_{r,x}$ where $d_{1},d_{r,x}$ are the distance
$\mathbf{S}$-RIS and RIS-$x$,
$h_{l,x}^{(a)}=\exp\left(-{\kappa_{a}\left(f\right)}d_{x}/{2}\right),$ and $\kappa_{a}\left(\cdot\right)$ denote the absorption coefficient depending on the signal frequency \cite{8610080}.
\color{black}\vspace{-.25cm}
\subsection{Channel gain with PE}
The channel gain with PE, defined in \eqref{receiv} is expressed as
\begin{equation}
h_{x}=h_{f,x}h_{p,x},\label{hx}
\end{equation}\vspace{-.25cm}
where\vspace{-.25cm}
\begin{align}
h_{f,x} & =\sum\limits _{n=1}^{N}h_{n}g_{n,x}\exp\left(j\theta_{n}\right)\label{eq:hxf},
\end{align}
is the cascaded fading between the source $\mathbf{S}$
and $x$ and $h_{p,x}$ is the fading attenuation due to PE, with $h_{n}$ and $g_{n,x}$ denote the channel coefficients
$\mathbf{S}-\mathbf{r}_{n}$ and $\mathbf{r}_{n}-x$, $\mathbf{r}_{n}$
as the $n$th reflecting element of RIS, and $\theta_{n}$ is the
phase shift at $\mathbf{r}_{n}$. Considering the optimal RIS phase shifts, maximizing the received SNR at the legitimate user as $\theta_{n}=-\angle h_{n}-\angle g_{n,\ell}$ yields\vspace{-.2cm}
\begin{align}
h_{f,\ell} & =\left|h_{f,\ell}\right|\triangleq \sum\limits _{n=1}^{N}\left|h_{n}\right|\left|g_{n,\ell}\right|,\label{eq:hfl}
\end{align}\vspace{-.25cm}
and
\begin{align}
h_{f,e} \triangleq h_{f,e}^{(c)}+jh_{f,e}^{(s)},\label{eq:hxf-1}
\end{align}
\begin{equation}
h_{f,e}^{(c)}\triangleq \sum\limits _{n=1}^{N}\left|h_{n}\right|\left|g_{n,e}\right|\cos\left(\phi_{n}\right),\label{hfec}
\end{equation}
\begin{equation}
h_{f,e}^{(s)}\triangleq \sum\limits _{n=1}^{N}\left|h_{n}\right|\left|g_{n,e}\right|\sin\left(\phi_{n}\right),\label{hfes}
\end{equation}
with  $j=\sqrt{-1}$, $\phi_{n}=-\angle g_{n,e}-\angle g_{n,\ell}$ are uniformly distributed in $[0,2\pi]$.

For the sake of simplicity, let $Y$ denote either $\left|h_{n}\right|$
or $\left|g_{n,x}\right|$, supposed $\alpha-\mu$ distributed.
Its PDF and CDF can be expressed as 
\begin{equation}
f_{Y}\left(z\right)=\frac{\alpha\mu^{\mu}}{\overline{h}^{\alpha\mu}\Gamma\left(\mu\right)}z^{\alpha\mu-1}\exp\left(-\frac{\mu}{\overline{h}^{\alpha}}z^{\alpha}\right);z\ge0,
\end{equation}\vspace{-0.25cm}
\begin{equation}
F_{Y}\left(z\right)=\gamma\left(\mu,\frac{\mu}{\overline{h}^{\alpha}}z^{\alpha}\right),
\end{equation}
with $\gamma\left(\cdot,\cdot\right)$ stands to the lower incomplete Gamma function and $\overline{h}=\sqrt[\alpha]{\mathbb{E}\left[Y^{\alpha}\right]}$. On the other hand, the PDF and CDF of $h_{p,x}^{2}$ can be approximated as \cite{4267802}
\begin{align}
f_{h_{p,x}^{2}}\left(z\right) & =\frac{\varphi}{2A_{0}^{\varphi}}z^{\frac{\varphi}{2}-1};0\leq z\leq A_{0}^{2},\label{eq:PDFhpx2}
\end{align}
and\vspace{-0.25cm}
\begin{equation}
F_{h_{p,x}^{2}}\left(z\right)  =\left\{ \begin{array}{c}
\frac{z^{\frac{\varphi}{2}}}{A_{0}^{\varphi}};0\leq z\leq A_{0}^{2}\\
1,z>A_{0}^{2}
\end{array}\right..\label{CDFhp2}
\end{equation}
% with $A_{0}=\text{erf}^{2}\left(\upsilon\right)$
% is the fraction of the collected power at receivers when there is
% no pointing error ($z=$0), $\text{erf}\left(\cdot\right)$ as the
% error function, $\upsilon=\frac{\sqrt{\pi}r_{1}}{\sqrt{2}\omega_{z}}$,
% $r_{1}$ as the receiver radius, $\omega_{z}$ as beam-radius, $\varphi={\omega_{z_{\text{eq}}}^{2}}/{(4\sigma_{p}^{2})}$
% where $\sigma_{p}$ is the jitter standard deviation, $\omega_{z_{\text{eq}}}^{2}$
% as the square ratio between the equivalent beam radius of the receive
% antenna and the pointing error displacement standard deviation given
% as $\omega_{z_{\text{eq}}}^{2}=\omega_{z}^{2} \sqrt{\pi A_{0}}/\left(2\upsilon\exp\left(-\upsilon^{2}\right)\right)$.
with $A_{0}=\text{erf}^{2}\left(\upsilon\right)$ is the fraction of the collected power at \textcolor{black}{the} receivers when there is
no pointing error ($z=$0). Details on $A_0$ computation can be found in \cite{4267802}. Of note, the first moment of $h_{p,x}^{2}$ can be evaluated as 
\begin{equation}
    \mathbb{E}\left[h_{p,x}^{2}\right]	=\frac{\varphi}{2A_{0}^{\varphi}}\int_{0}^{A_{0}^{2}}z^{\frac{\varphi}{2}}dz
	=\frac{\varphi A_{0}^{2}}{\varphi+2}.\label{momh_px2}
\end{equation}\vspace{-0.45cm}
\section{Statistical Functions of $\left|h_{x}\right|^{2}$}
In this section, the CDF and PDF of $\left|h_{\ell}\right|^{2}$ and $\left|h_{x}\right|^{2}$, respectively, are provided, in the case of large RIS, to determine that of $\gamma_x$ relying on \eqref{CDFgamma} as\vspace{-0.25cm}
\begin{equation}
    F_{\gamma_{x}}(x)=F_{\left|h_{x}\right|^{2}}\left(\frac{z\sigma_{x}^{2}}{\mathcal{P}h_{l,x}^{2}}\right);x=(\ell,e),\label{CDFhgammax}
\end{equation}\vspace{-0.3cm}
and 
\begin{equation}
   f_{\gamma_{x}}(x)=\frac{\sigma_{e}^{2}}{\mathcal{P}h_{l,x}^{2}}f_{h_{x}^{2}}\left(\frac{z\sigma_{x}^{2}}{\mathcal{P}h_{l,x}^{2}};\right);x=(\ell,e),\label{PDFgammax}
\end{equation} \vspace{-0.5cm}
\subsection{CDF of $\left|h_{\ell}\right|^{2}$}
\begin{proposition}
    For substantially larger $N$ values (i.e., $N>10$), the CDF of $h_{\ell}^{2}$ can be tightly approximated as \begin{align}
F_{h_{\ell }^{2}}\left( z\right) & \approx 1-Q\left( \frac{\frac{\sqrt{z}}{%
A_{0}}-\sqrt{\mathcal{G}_{N}}}{\sqrt{\Psi _{N}}}\right)   \notag \\
& +\mathcal{B}_{N}\sum_{k=0}^{\infty}\underset{\triangleq\mathcal{T}_{k}(z)}{\underbrace{\frac{\left(\frac{2\mathcal{G}_{N}}{\Psi_{N}}\right)^{\frac{k}{2}}z^{\frac{\varphi}{2}}\Gamma\left(\frac{k-\varphi+1}{2},
\frac{z}{2\Psi_{N}A_{0}^{2}}\right)}{k!}}},  \label{eq:CDFhl}
\end{align}\vspace{-0.25cm}
with
    \begin{equation}   \Psi_{N}=N\left(\mathcal{H}_{2}^{2}-\mathcal{H}_{1}^{4}\right),\label{PsiN}
\end{equation}\vspace{-0.25cm} 
\begin{equation}
\mathcal{G}_{N}=\left(N\mathcal{H}_{1}^{2}\right)^{2},\label{GN}
\end{equation}\vspace{-0.25cm}
\begin{equation}   \mathcal{B}_{N}=\frac{\left(2\Psi_{N}A_{0}^{2}\right)^{-\frac{\varphi}{2}}}{2\sqrt{\pi}\exp\left(\frac{\mathcal{G}_{N}}{2\Psi_{N}}\right)},\label{BN}
\end{equation}\vspace{-0.05cm}
\begin{equation}   \mathbb{E}\left[h_{\ell}^{2}\right]=\frac{N\varphi A_{0}^{2}\left(\mathcal{H}_{2}^{2}+\left(N-1\right)\mathcal{H}_{1}^{4}\right)}{\varphi+2},\label{exphl2}
\end{equation} \vspace{-0.15cm} 
\begin{equation}
\mathcal{H}_{i} \textcolor{black}{\triangleq}  \frac{\Gamma\left(\mu+\frac{i}{\alpha}\right)}{\Gamma\left(\mu\right)\mu^{\frac{i}{\alpha}}}\overline{h}^{i},i\ge1;\label{Hi}
\end{equation} $\Gamma\left(\cdot,\cdot\right)$ refers to the upper incomplete Gamma function. and $Q(\cdot)$ account for the $Q$-function.%
\end{proposition}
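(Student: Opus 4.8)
The plan is to exploit the fact that, by construction in \eqref{eq:hfl}, the cascaded fading amplitude $h_{f,\ell}=\sum_{n=1}^{N}|h_n|\,|g_{n,\ell}|$ is a sum of $N$ independent and identically distributed nonnegative terms $R_n\triangleq|h_n|\,|g_{n,\ell}|$. First I would compute the first two moments of a single term: since $|h_n|$ and $|g_{n,\ell}|$ are independent $\alpha$-$\mu$ variates whose $i$th moment is exactly $\mathcal{H}_i$ as given in \eqref{Hi}, one gets $\mathbb{E}[R_n]=\mathcal{H}_1^2$ and $\mathbb{E}[R_n^2]=\mathcal{H}_2^2$, hence $\Var(R_n)=\mathcal{H}_2^2-\mathcal{H}_1^4$. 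Invoking the central limit theorem for $N>10$, $h_{f,\ell}$ is approximately Gaussian with mean $N\mathcal{H}_1^2=\sqrt{\mathcal{G}_N}$ and variance $N(\mathcal{H}_2^2-\mathcal{H}_1^4)=\Psi_N$, matching \eqref{GN} and \eqref{PsiN}; because the mean grows like $N$ while the standard deviation grows like $\sqrt{N}$, the probability mass on the negative axis is negligible and I may treat $h_{f,\ell}$ as a genuine Gaussian. This immediately yields $\mathbb{E}[h_{f,\ell}^2]=\Psi_N+\mathcal{G}_N=N(\mathcal{H}_2^2+(N-1)\mathcal{H}_1^4)$, which combined with $\mathbb{E}[h_{p,\ell}^2]=\varphi A_0^2/(\varphi+2)$ from \eqref{momh_px2} and the independence of the two factors gives \eqref{exphl2}.

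Next, writing $h_{\ell}^2=h_{f,\ell}^2\,h_{p,\ell}^2$ and conditioning on $h_{p,\ell}^2=v$, which ranges over $[0,A_0^2]$ with density $f_{h_{p,\ell}^2}(v)=\tfrac{\varphi}{2}A_0^{-\varphi}v^{\varphi/2-1}$ from \eqref{eq:PDFhpx2}, I would write
\begin{equation}
F_{h_{\ell}^2}(z)=\int_0^{A_0^2}\Pr\!\left(h_{f,\ell}\le\sqrt{z/v}\right)f_{h_{p,\ell}^2}(v)\,dv=1-\int_0^{A_0^2}Q\!\big(g(v)\big)f_{h_{p,\ell}^2}(v)\,dv,
\end{equation}
where $g(v)\triangleq(\sqrt{z/v}-\sqrt{\mathcal{G}_N})/\sqrt{\Psi_N}$ and I used the Gaussian approximation together with $\int_0^{A_0^2}f_{h_{p,\ell}^2}(v)\,dv=1$.

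The core step is to evaluate the remaining integral. I would integrate by parts, integrating $f_{h_{p,\ell}^2}(v)$ to its CDF $F_{h_{p,\ell}^2}(v)=v^{\varphi/2}/A_0^{\varphi}$ of \eqref{CDFhp2} and differentiating $Q(g(v))$. The boundary term at $v=A_0^2$ produces exactly $Q((\sqrt{z}/A_0-\sqrt{\mathcal{G}_N})/\sqrt{\Psi_N})$, while the term at $v=0$ vanishes. Differentiating $Q(g(v))$ brings down $-\phi(g(v))g'(v)$ with $\phi$ the standard normal density; expanding the quadratic in $g(v)^2$ factors the exponential as $\exp(-\tfrac{\mathcal{G}_N}{2\Psi_N})\exp(-\tfrac{z}{2\Psi_N v})\exp(\tfrac{\sqrt{z\mathcal{G}_N}}{\Psi_N\sqrt{v}})$, and Taylor-expanding the last factor as $\sum_{k\ge0}\tfrac{1}{k!}(\sqrt{z\mathcal{G}_N}/(\Psi_N\sqrt{v}))^k$ reduces each summand to an integral of the form $\int_0^{A_0^2}v^{(\varphi-3-k)/2}\exp(-z/(2\Psi_N v))\,dv$. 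The substitution $t=z/(2\Psi_N v)$ turns this into $(z/(2\Psi_N))^{(\varphi-1-k)/2}\,\Gamma\!\big(\tfrac{k-\varphi+1}{2},\tfrac{z}{2\Psi_N A_0^2}\big)$, the upper incomplete Gamma function appearing in \eqref{eq:CDFhl}. Reassembling the prefactor $\sqrt{z}/(2A_0^{\varphi}\sqrt{2\pi\Psi_N})$ and collecting all powers of $z$, $2$, $\mathcal{G}_N$ and $\Psi_N$ then reproduces the series $\mathcal{B}_N\sum_{k\ge0}\mathcal{T}_k(z)$ with $\mathcal{B}_N$ as in \eqref{BN}.

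I expect the main obstacle to be the interchange of the infinite summation with the integral and the ensuing bookkeeping: one must verify that the Taylor series of $\exp(\sqrt{z\mathcal{G}_N}/(\Psi_N\sqrt{v}))$ may be integrated term by term on $(0,A_0^2]$ and that the resulting series of incomplete Gamma functions converges, after which the delicate part is tracking the fractional powers of $2$ and $\Psi_N$ so that they collapse into the compact constant $\mathcal{B}_N$. A secondary subtlety is the Gaussian approximation itself, whose accuracy for $N>10$ underlies the word \emph{tightly} in the statement and would be confirmed against the Monte Carlo simulations.
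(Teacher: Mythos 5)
Your proposal is correct and follows essentially the same route as the paper's Appendix A: the CLT approximation of $h_{f,\ell}$ as a Gaussian with mean $\sqrt{\mathcal{G}_{N}}$ and variance $\Psi_{N}$, conditioning to obtain the CDF of the product $h_{f,\ell}^{2}h_{p,\ell}^{2}$, a Maclaurin expansion of the Gaussian cross-term $\exp\bigl(\sqrt{z\mathcal{G}_{N}}/(\Psi_{N}\sqrt{v})\bigr)$, and term-by-term integration yielding upper incomplete Gamma functions that collect into $\mathcal{B}_{N}\sum_{k}\mathcal{T}_{k}(z)$. The only cosmetic difference is that you condition on $h_{p,\ell}^{2}$ and integrate by parts to surface the $1-Q(\cdot)$ boundary term, whereas the paper conditions on $h_{f,\ell}$ and splits the integral at $y=\sqrt{z}/A_{0}$; the substitution $y=\sqrt{z/v}$ maps your post-integration-by-parts integral exactly onto the paper's tail integral $\mathcal{C}(z)$, so the two derivations coincide.
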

\begin{proof}
    The proof is provided in Appendix A.
    \end{proof}
    \begin{corollary}
        The infinite series in \eqref{eq:CDFhl} is absolutely convergent and the error of the truncated series is upper bounded by 
        \begin{equation}
            \epsilon_{K_{\max}}(z)\leq\left\{ \begin{array}{c}
\mathcal{W}_{K_{\max}},z\geq A_{0}^{2}\mathcal{G}_{N}\\
2\mathcal{W}_{K_{\max}},z<A_{0}^{2}\mathcal{G}_{N}
\end{array}\right.,\label{epsilon}
        \end{equation}\vspace{-0.35cm}
        where 
        \begin{equation}           \mathcal{W}_{K_{\max}}=\left(\frac{\mathcal{G}_{N}}{\Psi_{N}}\right)^{K_{\max}+1}\sum_{i=0}^{K_{\max}+1}\frac{\left(\frac{\Psi_{N}}{2\mathcal{G}_{N}}\right)^{\frac{i}{2}}}{\Gamma\left(\frac{i}{2}+1\right)\left(K_{\max}+1-i\right)!}.\label{Wkmax}
        \end{equation}
       % where $\mathcal{T}_{k}(\cdot)$ is being defined in \eqref{eq:CDFhl}.
    \end{corollary}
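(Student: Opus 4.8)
The plan is to prove the two assertions in turn — absolute convergence first, then the explicit bound — using a single estimate that simultaneously removes the $\varphi$-dependence and exposes an underlying exponential series. Throughout I write $x_{0}:=z/(2\Psi_{N}A_{0}^{2})$ and $b:=\mathcal{G}_{N}/(2\Psi_{N})$. Every factor in $\mathcal{T}_{k}(z)$ is positive for $z>0$ (the powers, the factorial, and the upper incomplete Gamma with strictly positive second argument $x_{0}$), so absolute convergence is equivalent to convergence of the positive series. The key elementary estimate is that, since $\varphi>0$, for $t\ge x_{0}$ one has $t^{\varphi/2}\ge x_{0}^{\varphi/2}$; inserting this into $\Gamma(\cdot,x_{0})=\int_{x_{0}}^{\infty}t^{(k-\varphi-1)/2}e^{-t}\,dt$ yields the clean, $z$-uniform bound
\[
x_{0}^{\varphi/2}\,\Gamma\left(\tfrac{k-\varphi+1}{2},x_{0}\right)\le \Gamma\left(\tfrac{k+1}{2},x_{0}\right)\le \Gamma\left(\tfrac{k+1}{2}\right).
\]
Writing $z^{\varphi/2}=(2\Psi_{N}A_{0}^{2})^{\varphi/2}x_{0}^{\varphi/2}$, the prefactor $\mathcal{B}_{N}$ cancels both $(2\Psi_{N}A_{0}^{2})^{\varphi/2}$ and all $\varphi$-dependence, and the Legendre duplication formula $\Gamma(\tfrac{k+1}{2})/k!=\sqrt{\pi}\,2^{-k}/\Gamma(\tfrac{k}{2}+1)$ reduces the term to $\mathcal{B}_{N}\mathcal{T}_{k}(z)\le b^{k/2}/(2e^{b}\Gamma(\tfrac{k}{2}+1))$. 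The majorant $\sum_{k}b^{k/2}/\Gamma(\tfrac{k}{2}+1)$ is an entire (Mittag--Leffler) function of $\sqrt{b}$ — equivalently the ratio of consecutive terms tends to $0$ — which settles absolute convergence for every $z$.

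To recover the closed form $\mathcal{W}_{K_{\max}}$ I would retain the incomplete Gamma rather than discard it. Writing the regularized tail $\Gamma(\tfrac{k+1}{2},x_{0})/\Gamma(\tfrac{k+1}{2})$ through its integral, interchanging sum and integral (justified by positivity), and applying duplication in the form $1/[\Gamma(\tfrac{k}{2}+1)\Gamma(\tfrac{k+1}{2})]=2^{k}/(\sqrt{\pi}\,k!)$ collapses the $k$-sum into the exponential series $\sum_{k}(2\sqrt{bt})^{k}/k!$. Truncating at $K_{\max}$ leaves the remainder $\sum_{k\ge m}(2\sqrt{bt})^{k}/k!$ with $m:=K_{\max}+1$, which I bound by $\tfrac{(2\sqrt{bt})^{m}}{m!}e^{2\sqrt{bt}}$. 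Completing the square, $e^{-t+2\sqrt{bt}}=e^{b}e^{-(\sqrt{t}-\sqrt{b})^{2}}$, and substituting $u=\sqrt{t}$ then $v=u-\sqrt{b}$, the error is controlled by a constant multiple of $\int_{\sqrt{x_{0}}-\sqrt{b}}^{\infty}(v+\sqrt{b})^{m}e^{-v^{2}}\,dv$; a binomial expansion of $(v+\sqrt{b})^{m}$ reduces everything to the Gaussian moment integrals $\int v^{i}e^{-v^{2}}dv$, and a final use of duplication reassembles precisely the finite sum that defines $\mathcal{W}_{K_{\max}}$.

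The dichotomy in \eqref{epsilon} is dictated by the sign of the lower limit $\sqrt{x_{0}}-\sqrt{b}$, which is nonnegative iff $x_{0}\ge b$, i.e. iff $z\ge A_{0}^{2}\mathcal{G}_{N}$ — exactly the value at which the argument of the $Q$-function in \eqref{eq:CDFhl} vanishes. When $z\ge A_{0}^{2}\mathcal{G}_{N}$ the lower limit is nonnegative, so $\int_{\sqrt{x_{0}}-\sqrt{b}}^{\infty}v^{i}e^{-v^{2}}dv\le\int_{0}^{\infty}v^{i}e^{-v^{2}}dv=\tfrac{1}{2}\Gamma(\tfrac{i+1}{2})$, giving the bound $\mathcal{W}_{K_{\max}}$; when $z<A_{0}^{2}\mathcal{G}_{N}$ the lower limit is negative and only the two-sided estimate $\int_{-\infty}^{\infty}|v|^{i}e^{-v^{2}}dv=\Gamma(\tfrac{i+1}{2})$ is available, which is exactly twice as large and produces the factor $2$. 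The main obstacle is the middle step: carrying out the sum--integral interchange and the exponential-tail estimate while tracking constants so that the reassembled finite sum matches $\mathcal{W}_{K_{\max}}$ as stated; once the $\varphi$-stripping inequality and the duplication identities are in place, the convergence claim and the regime split follow comparatively routinely.
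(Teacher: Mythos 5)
Your proof is correct, and for the truncation bound it is, at its core, the same argument as the paper's. The paper bounds the Maclaurin remainder of $\exp\left(y\sqrt{\mathcal{G}_{N}}/\Psi_{N}\right)$ inside the integral $\mathcal{C}(z)$ of Appendix~A via the Lagrange form (with $t_{y}\le y$), which is the same estimate as your elementary tail bound $\sum_{k\ge m}x^{k}/k!\le \frac{x^{m}}{m!}e^{x}$; both routes then complete the square, reduce to a shifted Gaussian tail integral whose lower limit changes sign exactly at $z=A_{0}^{2}\mathcal{G}_{N}$ (your $\sqrt{x_{0}}-\sqrt{b}$ equals the paper's $\mathcal{U}(z)/\sqrt{2}$), split the two cases via one-sided versus two-sided Gaussian moments (hence the factor $2$), and reassemble $\mathcal{W}_{K_{\max}}$ through a binomial expansion plus the duplication identity $i!=2^{i}\pi^{-1/2}\Gamma\left(\frac{i+1}{2}\right)\Gamma\left(\frac{i}{2}+1\right)$. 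The only bookkeeping difference is that you bound the series tail after term-by-term integration (working in $t=y^{2}/(2\Psi_{N})$), while the paper bounds the integrand before integrating; your $\varphi$-stripping inequality $x_{0}^{\varphi/2}\Gamma\left(\frac{k-\varphi+1}{2},x_{0}\right)\le\Gamma\left(\frac{k+1}{2},x_{0}\right)$ plays precisely the role of the paper's estimate $y^{-\varphi}\le\left(\sqrt{z}/A_{0}\right)^{-\varphi}$. Where you genuinely depart from the paper is the convergence claim: the paper runs a ratio test on $\mathcal{T}_{k+1}(z)/\mathcal{T}_{k}(z)$ and needs the incomplete-Gamma ratio inequalities of Borwein et al.\ to control $\mathcal{U}_{k}(z)$, whereas you majorize each term, via $\varphi$-stripping and duplication, by $b^{k/2}/\left(2e^{b}\Gamma\left(\frac{k}{2}+1\right)\right)$, a convergent Mittag--Leffler-type series. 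Your comparison argument is more self-contained (no external asymptotic inequalities are cited), is uniform in $z$, and lets the same key inequality do double duty in both halves of the proof. As a minor remark, if you track constants your case $z\geq A_{0}^{2}\mathcal{G}_{N}$ actually yields $\frac{1}{2}\mathcal{W}_{K_{\max}}$, slightly tighter than stated, which of course still implies the corollary.
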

\begin{proof}
    The proof is provided in Appendix B.
    \end{proof}\vspace{-0.25cm}
    \begin{remark}
     The above bound is irrespective of $z$. Further, as $i/2$ in the finite series is bounded by $(K_{\max}+1)/2$, it turns that $\mathcal{W}_{K_{\max}}$ is proportional to $\mathcal{G}_{N}/\Psi_{N}=N/(\mathcal{H}_2^2/\mathcal{H}_1^4-1)$. Thus the error is increasing with $N$. Further, substituting \eqref{Hi} into \eqref{GN}-\eqref{PsiN}, it can be seen that the error is proportional to the following function\vspace{-0.2cm}
     \begin{equation}      \Phi\left(\alpha,\mu\right)=\frac{\Gamma^{2}\left(\mu+\frac{1}{\alpha}\right)}{\Gamma\left(\mu\right)\Gamma\left(\mu+\frac{2}{\alpha}\right)}.
     \end{equation}
     Its two partial derivative with respect to $\alpha$ and $\mu$ can be evaluated as\vspace{-0.15cm}
     \begin{equation}      \frac{\partial\Phi\left(\alpha,\mu\right)}{\partial\alpha}=\frac{2\Gamma\left(\mu+\frac{1}{\alpha}\right)\left[\psi\left(\mu+\frac{2}{\alpha}\right)-\psi\left(\mu+\frac{1}{\alpha}\right)\right]}{\alpha^{2}\Gamma\left(\mu\right)\Gamma\left(\mu+\frac{2}{\alpha}\right)},
     \end{equation}\vspace{-0.25cm}
     \begin{equation}       \frac{\partial\Phi\left(\alpha,\mu\right)}{\partial\mu}	=\frac{\Gamma^{2}\left(\mu+\frac{1}{\alpha}\right)\left[2\left(1+\frac{1}{\alpha^{2}}\right)\psi\left(\mu+\frac{2}{\alpha}\right)-\psi\left(\mu\right)\right]}{\Gamma\left(\mu\right)\Gamma\left(\mu+\frac{2}{\alpha}\right)},
     \end{equation}%\vspace{-.05cm}
     where $\psi(\cdot)$ represents the Digamma function, which is increasing. 
     As a result, both derivatives are positive and the error rises with the increase of either $\alpha$ or $\mu$, requiring a greater $K_{\max}$ to lower it.
    \end{remark}  \vspace{-0.5cm}
\subsection{PDF of $\left|h_{e}\right|^{2}$}
\begin{proposition}
For somewhat higher N numbers, the PDF of $h_{\ell}^{2}$ may be accurately
approximated as\vspace{-0.15cm}
\begin{align}
f_{\left\vert h_{e}\right\vert ^{2}}\left(z\right)\approx\left\{ \begin{array}{c}
\frac{\varphi z^{\frac{\varphi}{2}-1}}{2\left(N\mathcal{H}_{2}^{2}A_{0}^{2}\right)^{\frac{\varphi}{2}}}\Gamma\left(-\frac{\varphi}{2}+1,\frac{z}{N\mathcal{H}_{2}^{2}A_{0}^{2}}\right);z>0\\
\frac{\varphi}{N\mathcal{H}_{2}^{2}A_{0}^{2}\left(\varphi-2\right)};z=0
\end{array}\right.,\label{pdfhe222}
\end{align}
with\vspace{-0.15cm}
\begin{equation} \mathbb{E}\left[\left\vert h_{e}\right\vert ^{2}\right]=\frac{\varphi N \mathcal{H}_{2}^{2}A_{0}^{2}}{\varphi+2}.\label{exphe2}
\end{equation}
\end{proposition}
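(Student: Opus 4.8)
The plan is to exploit the product structure $|h_e|^2=|h_{f,e}|^2\,h_{p,e}^2$, characterize each factor separately, and then combine them through the standard product-of-independent-variables density formula. First I would treat the cascaded fading term. Writing $R_n\triangleq|h_n||g_{n,e}|$, the real and imaginary parts in \eqref{hfec}--\eqref{hfes} become $h_{f,e}^{(c)}=\sum_{n=1}^N R_n\cos\phi_n$ and $h_{f,e}^{(s)}=\sum_{n=1}^N R_n\sin\phi_n$, each a sum of $N$ i.i.d.\ terms. Since $\phi_n$ is uniform on $[0,2\pi]$ and independent of $R_n$, one checks that $\mathbb{E}[R_n\cos\phi_n]=\mathbb{E}[R_n\sin\phi_n]=0$, that $\mathbb{E}[(R_n\cos\phi_n)^2]=\mathbb{E}[(R_n\sin\phi_n)^2]=\tfrac12\mathbb{E}[R_n^2]$, and that $\mathbb{E}[R_n^2\cos\phi_n\sin\phi_n]=0$. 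Using the independence of $|h_n|$ and $|g_{n,e}|$ together with the $\alpha$-$\mu$ moment identity $\mathbb{E}[Y^2]=\mathcal{H}_2$ implied by \eqref{Hi}, this yields $\mathbb{E}[R_n^2]=\mathcal{H}_2^2$.

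Next, invoking the bivariate central limit theorem for large $N$, the pair $(h_{f,e}^{(c)},h_{f,e}^{(s)})$ converges to a zero-mean Gaussian vector with common component variance $N\mathcal{H}_2^2/2$ and uncorrelated, hence (being jointly Gaussian) independent, components. Consequently $|h_{f,e}|^2=(h_{f,e}^{(c)})^2+(h_{f,e}^{(s)})^2$ is a sum of two squared i.i.d.\ zero-mean Gaussians and is therefore exponentially distributed, with density $f_{|h_{f,e}|^2}(w)=\tfrac{1}{N\mathcal{H}_2^2}\exp(-w/(N\mathcal{H}_2^2))$ and mean $N\mathcal{H}_2^2$. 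This CLT/independence step is the conceptual heart of the argument and the part requiring the ``large $N$'' hypothesis; everything afterwards is exact given this approximation.

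With $W=|h_{f,e}|^2$ exponential and $V=h_{p,e}^2$ having density \eqref{eq:PDFhpx2} on $[0,A_0^2]$ and independent of $W$, I would apply the product density $f_{|h_e|^2}(z)=\int_0^{A_0^2} v^{-1} f_W(z/v) f_V(v)\,dv$. Substituting the two densities leaves the integral $\int_0^{A_0^2} v^{\varphi/2-2}\exp(-z/(N\mathcal{H}_2^2 v))\,dv$; the change of variable $t=z/(N\mathcal{H}_2^2 v)$ turns this into $(z/(N\mathcal{H}_2^2))^{\varphi/2-1}\int_{z/(N\mathcal{H}_2^2 A_0^2)}^{\infty} t^{-\varphi/2}e^{-t}\,dt$, and the tail integral is exactly the upper incomplete Gamma function $\Gamma(1-\tfrac{\varphi}{2},\tfrac{z}{N\mathcal{H}_2^2 A_0^2})$. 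Collecting constants and using $A_0^{\varphi}(N\mathcal{H}_2^2)^{\varphi/2}=(N\mathcal{H}_2^2 A_0^2)^{\varphi/2}$ produces the $z>0$ branch of \eqref{pdfhe222}.

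Finally, for the boundary value $z=0$ I would return to the pre-substitution integral, where the exponential factor becomes unity, leaving $\int_0^{A_0^2} v^{\varphi/2-2}\,dv=\tfrac{2}{\varphi-2}A_0^{\varphi-2}$, which converges precisely when $\varphi>2$ and gives the constant $\tfrac{\varphi}{N\mathcal{H}_2^2 A_0^2(\varphi-2)}$. The mean in \eqref{exphe2} then follows immediately from independence, $\mathbb{E}[|h_e|^2]=\mathbb{E}[W]\,\mathbb{E}[h_{p,e}^2]=N\mathcal{H}_2^2\cdot\tfrac{\varphi A_0^2}{\varphi+2}$, invoking \eqref{momh_px2}. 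The main obstacle is rigorously justifying the Gaussian approximation and the attendant independence of the in-phase and quadrature components; the remaining integral manipulations are routine.
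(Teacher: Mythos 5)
Your proof is correct and follows essentially the same route as the paper's Appendix C: the CLT applied to the in-phase and quadrature sums (zero mean, variance $N\mathcal{H}_{2}^{2}/2$ each, uncorrelated) gives the exponential law for $\left|h_{f,e}\right|^{2}$, after which integration against the pointing-error density and the change of variable $t=z/(N\mathcal{H}_{2}^{2}v)$ produce the upper incomplete Gamma form, with the $z=0$ branch (valid for $\varphi>2$) and the mean $\mathbb{E}\bigl[\left|h_{e}\right|^{2}\bigr]=N\mathcal{H}_{2}^{2}\cdot\varphi A_{0}^{2}/(\varphi+2)$ obtained the same way. The only cosmetic difference is that you apply the product-density formula to reach the PDF directly, whereas the paper first computes the CDF of $\left|h_{e}\right|^{2}$ and then differentiates; both reduce to the identical integral $\int_{1/A_{0}^{2}}^{\infty}e^{-zt/(N\mathcal{H}_{2}^{2})}t^{-\varphi/2}\,dt$ up to the substitution $t=1/v$.
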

\begin{proof}
    The proof is provided in Appendix C.
\end{proof}
\begin{remark}
    Using \eqref{exphl2} and \eqref{exphe2} along with \eqref{eq:SNR}, the average SNRs at node $x$ can be expressed as
\begin{equation}
\textcolor{black}{\overline{\gamma}_{\ell}}=\frac{N\varphi A_{0}^{2}\mathcal{P}h_{l,\ell}^{2}\left(\mathcal{H}_{2}^{2}+\left(N-1\right)\mathcal{H}_{1}^{4}\right)}{\sigma_{\ell}^{2}\left(\varphi+2\right)},\label{gambl}
\end{equation}
and\vspace{-0.15cm}
\begin{equation}
\textcolor{black}{\overline{\gamma}_{e}}=\frac{N\varphi A_{0}^{2}\mathcal{P}h_{l,e}^{2}\mathcal{H}_{2}^{2}}{\sigma_{e}^{2}\left(\varphi+2\right)}.\label{gambe}
\end{equation}
\end{remark}
Thus, for equal $\sigma_{x}^{2}$ and path gain, \textcolor{black}{$\overline{\gamma}_{\ell}>\overline{\gamma}_{e}$}.
\section{SOP of the Considered System}
\begin{proposition}
    For a given secrecy rate $R_s$, the SOP and IP of the considered system can be approximated as
    \begin{equation}      \mathcal{P}_{\text{sop}}(R_s)\approx 1-\frac{1}{\mathcal{M}}\left[\mathcal{E}\left(\mathcal{D}_{2}^{(\mathcal{L})}\right)-\mathcal{B}_{N}\sum_{k=0}^{\infty}\mathcal{E}\left(\mathcal{D}_{1,k}^{(\mathcal{L})}\right)\right],  \label{SOP}
    \end{equation}
    where\vspace{-0.15cm} 
    \begin{equation}        \mathcal{E}\left(\mathcal{D}_{\cdot}^{(\mathcal{L})}\right)=\left\{ \begin{array}{c}
\frac{1}{48\mathcal{L}\mathcal{S}}\sum_{s=1}^{\mathcal{S}+1}c_{s}\mathcal{D}_{\cdot}^{(\mathcal{L})}\left(\frac{s}{\mathcal{L}\mathcal{S}}\right),R_{s}>0\\
\sum_{s=0}^{\mathcal{S}}w_{s}\mathcal{D}_{\cdot}^{(0)}\left(y_{s}\right),R_{s}=0
\end{array}\right.,\label{Espsilonfunct}
    \end{equation}\vspace{-0.1cm} 
    \begin{equation}       \mathcal{D}_{1,k}^{(\mathcal{L})}\left(z\right)=\left\{ \begin{array}{c}
z^{-2}\mathcal{Y}_{k}^{(\mathcal{L})}\left(\frac{1}{z}\right);\mathcal{L}>0,z<\frac{1}{\mathcal{L}}\\
\mathcal{Y}_{k}^{(0)}\left(z\right);\mathcal{L}=0
\end{array}\right.,\label{D1k}
    \end{equation}\vspace{-0.1cm}    
    \begin{equation}     \mathcal{D}_{2}^{(\mathcal{L})}\left(z\right)=\left\{ \begin{array}{c}
z^{-2}\mathcal{Z}^{(\mathcal{L})}\left(\frac{1}{z}\right);\mathcal{L}>0,z<\frac{1}{\mathcal{L}}\\
\textcolor{black}{\mathcal{Z}^{(0)}}\left(z\right);\mathcal{L}=0
\end{array}\right.,\label{D2}
    \end{equation}
    \begin{equation}
        \mathcal{Y}_{k}^{(\mathcal{L})}(z)=\mathcal{T}_{k}\left(z\right)f_{\left|h_{e}\right|^{2}}\left(\frac{z-\mathcal{L}}{\mathcal{M}}\right),
    \end{equation}
    \begin{equation}
        \mathcal{Z}^{(\mathcal{L})}(z)=Q\left(\frac{\frac{\sqrt{z}}{A_{0}}-\sqrt{\mathcal{G}_{N}}}{\sqrt{\Psi_{N}}}\right)f_{\left|h_{e}\right|^{2}}\left(\frac{z-\mathcal{L}}{\mathcal{M}}\right),
    \end{equation}\vspace{-0.15cm}
with $\mathcal{S}$ as positive integer above $8$,  $\mathcal{M}=\frac{2^{R_{s}}h_{l,e}^{2}\sigma_{\ell}^{2}}{h_{l,\ell}^{2}\sigma_{e}^{2}}$, $\mathcal{L}=\frac{\sigma_{\ell}^{2}\left(2^{R_{s}}-1\right)}{\mathcal{P}h_{l,\ell}^{2}}$, \textcolor{black}{$c_s$} are provided in Table \ref{C-coefficients}, $y_{s}$ is the $s$-th root of the Laguerre polynomial $L_{\mathcal{S}}$
of order $\mathcal{S}$, \textcolor{black}{$\mathcal{T}_{k}(\cdot)$ is being defined in \eqref{eq:CDFhl}}, and\vspace{-.15cm}
\begin{equation}
w_{s}=\frac{y_{s}}{\left[\left(\mathcal{S}+1\right)L_{\mathcal{S}+1}\left(y_{s}\right)\right]^{2}}.
\end{equation}\vspace{-0.35cm}
\begin{center}
\begin{table}[H]
\caption{\textcolor{black}{Coefficients $c_s$.}}
\color{black}
\begin{tabular}{c|c|c|c|c}
\hline 
$c_{\textcolor{black}{1}}=c_{\textcolor{black}{\mathcal{S}+1}}$ & $c_{\textcolor{black}{2}}=c_{\textcolor{black}{\mathcal{S}}}$ & $c_{\textcolor{black}{3}}=c_{\textcolor{black}{\mathcal{S}-1}}$ & $c_{\textcolor{black}{4}}=c_{\textcolor{black}{\mathcal{S}-2}}$ & $\{c_{s}\}_{\textcolor{black}{5}\le s\le \textcolor{black}{\mathcal{S}-3}}$\tabularnewline
\hline 
\hline 
$17$ & $59$ & $43$ & $49$ & $48$\tabularnewline
\hline 
\end{tabular}\vspace{-0.7cm}
    	\label{C-coefficients}
			\end{table}
		\par\end{center}
\end{proposition}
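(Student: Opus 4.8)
The plan is to begin from the information-theoretic definition of the secrecy outage, $\mathcal{P}_{\text{sop}}(R_s)=\Pr\!\left[\log_2\!\frac{1+\gamma_\ell}{1+\gamma_e}<R_s\right]$, and to condition on the eavesdropper SNR. Rewriting the outage event as $\gamma_\ell<2^{R_s}(1+\gamma_e)-1$ and integrating over $\gamma_e$ gives
\[
\mathcal{P}_{\text{sop}}(R_s)=\int_0^\infty F_{\gamma_\ell}\!\left(2^{R_s}(1+y)-1\right)f_{\gamma_e}(y)\,dy.
\]
Invoking the SNR-to-channel-gain relations \eqref{CDFhgammax}--\eqref{PDFgammax} and substituting $t=y\sigma_e^2/(\mathcal{P}h_{l,e}^2)$ collapses the argument of $F_{|h_\ell|^2}$ into the affine form $\mathcal{L}+\mathcal{M}t$, with $\mathcal{L}$ and $\mathcal{M}$ exactly as defined in the statement, so that $\mathcal{P}_{\text{sop}}(R_s)=\int_0^\infty F_{|h_\ell|^2}(\mathcal{L}+\mathcal{M}t)\,f_{|h_e|^2}(t)\,dt$.

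Next I would insert the approximation \eqref{eq:CDFhl} for $F_{|h_\ell|^2}$. The constant unity integrates against $f_{|h_e|^2}$ to produce the leading $1$, while the $Q$-term and the $\mathcal{T}_k$-series, after the change of variable $z=\mathcal{L}+\mathcal{M}t$ (which supplies the $1/\mathcal{M}$ Jacobian and the lower limit $z=\mathcal{L}$), assemble into $\mathcal{Z}^{(\mathcal{L})}$ and $\mathcal{Y}_k^{(\mathcal{L})}$. This yields
\[
\mathcal{P}_{\text{sop}}(R_s)=1-\frac{1}{\mathcal{M}}\left[\int_{\mathcal{L}}^\infty\mathcal{Z}^{(\mathcal{L})}(z)\,dz-\mathcal{B}_N\sum_{k}\int_{\mathcal{L}}^\infty\mathcal{Y}_k^{(\mathcal{L})}(z)\,dz\right].
\]
Interchanging the summation and integration here is the step that needs care; it is licensed by the absolute convergence of the series established in Corollary 1, whose uniform-in-$z$ bound \eqref{epsilon} simultaneously controls the truncation error of the resulting sum.

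The remaining task is to evaluate the two semi-infinite integrals. For $R_s>0$ (hence $\mathcal{L}>0$) I would apply the reciprocal substitution $u=1/z$, which maps $[\mathcal{L},\infty)$ onto the finite interval $[0,1/\mathcal{L}]$ and, through the $dz=-u^{-2}\,du$ factor, turns the integrands into $\mathcal{D}_2^{(\mathcal{L})}$ and $\mathcal{D}_{1,k}^{(\mathcal{L})}$ of \eqref{D2}--\eqref{D1k}. Since the original integrands decay faster than any power as $z\to\infty$ (the incomplete-Gamma factors in $\mathcal{T}_k$ and in $f_{|h_e|^2}$ decay exponentially), the transformed integrands vanish at $u=0$ and are smooth on $[0,1/\mathcal{L}]$, so the composite alternative Simpson's rule with step $1/(\mathcal{L}\mathcal{S})$, nodes $s/(\mathcal{L}\mathcal{S})$, and end-corrected weights $c_s/48$ from Table \ref{C-coefficients} applies and produces the operator $\mathcal{E}(\cdot)$. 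For $R_s=0$ one has $\mathcal{L}=0$, the interval stays $[0,\infty)$, and the integrals are instead handled by Gauss--Laguerre quadrature at the roots $y_s$ of $L_{\mathcal{S}}$ with weights $w_s$; this $R_s=0$ specialization is precisely the intercept probability, since $\mathcal{P}_{\text{int}}=\Pr[\gamma_\ell<\gamma_e]=\mathcal{P}_{\text{sop}}(0)$.

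The hard part will be the rigorous justification of the term-by-term integration of the infinite $k$-series against $f_{|h_e|^2}$: one must verify $\sum_k\int_{\mathcal{L}}^\infty|\mathcal{Y}_k^{(\mathcal{L})}(z)|\,dz<\infty$ so that Tonelli's theorem permits the interchange, for which the bound of Corollary 1 is the essential ingredient. A secondary technical point is confirming that the reciprocal-substituted integrands are genuinely regular at both endpoints of $[0,1/\mathcal{L}]$, so that the finite-interval quadrature error stays controlled and the approximation asserted in \eqref{SOP} is tight.
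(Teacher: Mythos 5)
Your proposal is correct and follows essentially the same route as the paper's own proof: rewriting the outage event as $h_\ell^2 \le \mathcal{M}\left|h_e\right|^2 + \mathcal{L}$, inserting the approximate CDF \eqref{eq:CDFhl}, integrating the series term by term, and then applying the reciprocal substitution with the alternative Simpson's rule for $R_s>0$ and Gauss--Laguerre quadrature for $R_s=0$. The only difference is cosmetic (the paper performs the $z \mapsto 1/z$ substitution before inserting the CDF approximation, and it does not spell out the Tonelli/convergence justification you rightly flag), so no further changes are needed.
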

\begin{proof}
       The proof is provided in Appendix D. 
\end{proof}
\color{black}
\begin{remark} \label{rk1}
\begin{enumerate}[label=(\Alph*)]
\item It's worth mentioning that $\mathcal{M}$ is constant for equal $\sigma_x$ and distance $d_{r,x}$. Thus, the IP becomes independent of these parameters as $\mathcal{L}=0$.
\item Using Eqs. \eqref{gambl}-\eqref{gambe} along with the approximation of $\left\vert h_{x}\right\vert$ for $N\geq 10$, one can write\vspace{-.25cm}
\begin{align}
\frac{\gamma_{\ell}}{\gamma_{e}} & =\frac{\overline{\gamma}_{\ell}X}{\overline{\gamma}_{e}Y}\nonumber \\
 & =\mathcal{J}(N)\times\frac{X}{Y},\label{ratiogammas}
\end{align}
where $X$ and $Y$ are two RVs of unit mean and irrespective of $N$, and\vspace{-.25cm}
\begin{equation}
\mathcal{J}(N)=\left(1+\left(N-1\right)\frac{\mathcal{H}_{1}^{4}}{\mathcal{H}_{2}^{2}}\right)\frac{h_{l,\ell}^{2}}{h_{l,e}^{2}}.
\end{equation}\vspace{-.25cm}
Further,\vspace{-.25cm}
\begin{align}
\mathcal{P}_{\text{sop}}(R_{s}) & \triangleq\Pr\left(\log_{2}\left(\frac{1+\gamma_{\ell}}{1+\gamma_{e}}\right)\leq R_{s}\right)\nonumber \\
 & \overset{(a)}{\approx}\Pr\left(\log_{2}\left(\frac{\gamma_{\ell}}{\gamma_{e}}\right)\leq R_{s}\right)\nonumber \\
 & \overset{(b)}{=}\Pr\left(\frac{X}{Y}\leq\frac{2^{R_{s}}}{\mathcal{J}(N)}\right),
\end{align}
where step (a) holds for significantly high SNR values, while step (b) follows by utilizing \eqref{ratiogammas}. Now, because $\mathcal{J}(N)$
is increasing on $N$, it follows that the system becomes more secure
as $N$ rises.
\end{enumerate}
\end{remark}
\begin{proposition}
    The asymptotic expression for the SOP can be evaluated as
    \begin{equation}
        \mathcal{P}_{\text{sop}}(R_{s})=\sum_{s=0}^{\mathcal{S}}w_{s}\mathcal{K}\left(y_{s}\right)\exp\left(y_{s}\right)-\mathcal{L}\times \mathcal{O}_{N},\label{SOPasympt0}
    \end{equation}\vspace{-.25cm}
    where \vspace{-.25cm}
    \begin{equation}
        \mathcal{O}_{N}=\frac{\varphi Q\left(\sqrt{\frac{\mathcal{G}_{N}}{\Psi_{N}}}\right)}{\mathcal{M}N\mathcal{H}_{2}^{2}A_{0}^{2}\left(\varphi-2\right)},\label{ON}
    \end{equation}\vspace{-.25cm}
    and \vspace{-.1cm}
        \begin{equation}     \mathcal{K}\left(z\right)=F_{h_{\ell}^{2}}\left(\mathcal{M}z\right)\times\left(\begin{array}{c}
f_{\left|h_{e}\right|^{2}}\left(z\right)\left(1-\mathcal{L}\left(\frac{\varphi}{2}-1\right)z^{-1}\right)\\
+\mathcal{L}\left[\frac{\varphi z^{-1}}{2N\mathcal{H}_{2}^{2}A_{0}^{2}}\exp\left(-\frac{z}{N\mathcal{H}_{2}^{2}A_{0}^{2}}\right)\right]
\end{array}\right).\label{Kz}
        \end{equation}
    %    \hrule
    %\end{figure*}
\end{proposition}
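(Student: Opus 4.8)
The plan is to begin from the defining event of the secrecy outage, $\mathcal{P}_{\text{sop}}(R_s)=\Pr\!\left(\log_2\frac{1+\gamma_\ell}{1+\gamma_e}\le R_s\right)$, and rewrite it directly in terms of the squared channel gains rather than the SNRs. Using \eqref{eq:SNR} to insert $\gamma_\ell=\mathcal{P}h_{l,\ell}^2 h_\ell^2/\sigma_\ell^2$ and $\gamma_e=\mathcal{P}h_{l,e}^2|h_e|^2/\sigma_e^2$, the event $1+\gamma_\ell\le 2^{R_s}(1+\gamma_e)$ is equivalent to $h_\ell^2\le \mathcal{M}|h_e|^2+\mathcal{L}$, with $\mathcal{M}=\frac{2^{R_{s}}h_{l,e}^{2}\sigma_{\ell}^{2}}{h_{l,\ell}^{2}\sigma_{e}^{2}}$ and $\mathcal{L}=\frac{\sigma_{\ell}^{2}(2^{R_{s}}-1)}{\mathcal{P}h_{l,\ell}^{2}}$. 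Conditioning on $|h_e|^2=z$ and averaging (equivalently, applying \eqref{CDFgamma}) yields the single integral $\mathcal{P}_{\text{sop}}(R_s)=\int_0^\infty F_{h_\ell^2}(\mathcal{M}z+\mathcal{L})\,f_{|h_e|^2}(z)\,dz$, into which I substitute the approximate CDF \eqref{eq:CDFhl} and the PDF \eqref{pdfhe222}.

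The asymptotic regime is $\mathcal{P}\to\infty$, under which $\mathcal{L}\to 0$ while $\mathcal{M}$ remains fixed, so I would Taylor-expand $F_{h_\ell^2}(\mathcal{M}z+\mathcal{L})\approx F_{h_\ell^2}(\mathcal{M}z)+\mathcal{L}\,f_{h_\ell^2}(\mathcal{M}z)$ and keep terms up to first order in $\mathcal{L}$. The zeroth-order part gives $\int_0^\infty F_{h_\ell^2}(\mathcal{M}z)f_{|h_e|^2}(z)\,dz$, and the first-order part carries the residual secrecy-rate dependence. Since only the CDF $F_{h_\ell^2}$ is in closed form, I would write $f_{h_\ell^2}(\mathcal{M}z)=\mathcal{M}^{-1}\frac{d}{dz}F_{h_\ell^2}(\mathcal{M}z)$ and integrate the first-order term by parts, which keeps $F_{h_\ell^2}(\mathcal{M}z)$ as a common factor and produces a boundary contribution plus a residual integral.

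The boundary contribution at $z\to\infty$ vanishes because $f_{|h_e|^2}$ decays, but the contribution at $z=0$ does not vanish: evaluating \eqref{eq:CDFhl} at the origin (the series terms $\mathcal{T}_k$ vanish there) gives $F_{h_\ell^2}(0)\approx Q\!\left(\sqrt{\mathcal{G}_N/\Psi_N}\right)$, while $f_{|h_e|^2}(0)$ is the finite value in \eqref{pdfhe222}. Their product, carrying the prefactor $\mathcal{L}/\mathcal{M}$, reproduces exactly the term $\mathcal{L}\,\mathcal{O}_N$ of \eqref{ON}. For the residual integral I would differentiate the incomplete-Gamma form of $f_{|h_e|^2}$, using $\frac{d}{dz}\Gamma(s,z/a)=-a^{-1}(z/a)^{s-1}e^{-z/a}$, to obtain $f'_{|h_e|^2}(z)=\left(\frac{\varphi}{2}-1\right)z^{-1}f_{|h_e|^2}(z)-\frac{\varphi z^{-1}}{2N\mathcal{H}_2^2A_0^2}\exp\!\left(-\frac{z}{N\mathcal{H}_2^2A_0^2}\right)$. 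Substituting this and recombining with the zeroth-order term collapses the bracketed integrand precisely into $\mathcal{K}(z)$ of \eqref{Kz}, so that the first-order approximation reads $\mathcal{P}_{\text{sop}}(R_s)\approx\int_0^\infty \mathcal{K}(z)\,dz-\mathcal{L}\,\mathcal{O}_N$.

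Finally, I would evaluate the remaining integral by Gauss--Laguerre quadrature: inserting the factor $e^{z}e^{-z}$ and approximating $\int_0^\infty[\mathcal{K}(z)e^{z}]\,e^{-z}\,dz\approx\sum_{s=0}^{\mathcal{S}}w_s\,\mathcal{K}(y_s)\,e^{y_s}$, with $y_s$ the Laguerre roots and $w_s$ the weights already defined, which delivers \eqref{SOPasympt0}. I expect the main obstacle to be the boundary step: because the Gaussian (central-limit-based) approximation behind \eqref{eq:CDFhl} does not vanish at the origin, the naive boundary term is nonzero and must be tracked carefully and identified with $\mathcal{O}_N$ rather than discarded. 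A secondary delicate point is verifying that the derivative of the incomplete Gamma function reduces the residual integrand to exactly the two terms appearing in \eqref{Kz}, and that $\mathcal{K}(z)e^{z}$ is sufficiently smooth for the quadrature to remain accurate.
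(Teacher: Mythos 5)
Your proposal is correct and takes essentially the same approach as the paper's proof: first-order Taylor expansion of the SOP integral as $\mathcal{L}\to 0$, retention of the non-vanishing origin contribution $\frac{\mathcal{L}}{\mathcal{M}}F_{h_\ell^2}(0)f_{|h_e|^2}(0)=\mathcal{L}\,\mathcal{O}_N$ via $F_{h_\ell^2}(0)\approx Q\left(\sqrt{\mathcal{G}_N/\Psi_N}\right)$, reduction of the remaining correction through the derivative of the incomplete-Gamma PDF, and Gauss--Laguerre quadrature for the surviving integral. The only difference is mechanical---you expand $F_{h_\ell^2}(\mathcal{M}z+\mathcal{L})$ inside the integrand and integrate by parts, whereas the paper differentiates the parametrized integral with $\mathcal{L}$ in its lower limit and in the argument of $f_{|h_e|^2}$---and both routes in fact produce the prefactor $\mathcal{L}/\mathcal{M}$ (not $\mathcal{L}$) on the $f'_{|h_e|^2}$ term, so your claim of recovering \eqref{Kz} ``precisely'' inherits the same silent factor-of-$\mathcal{M}$ drop that the paper itself makes when passing from its intermediate expansion to \eqref{Kz}.
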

\begin{proof}
       The proof is provided in Appendix E. 
\end{proof}
\begin{remark}
\begin{enumerate}[label=(\Alph*)]
    \item When $R_s=0$, $\mathcal{L}=0$. As a result, the SOP tends to the IP;
    \item the secrecy diversity order equals 1 and $\mathcal{O}_N$ is decreasing on $N$.
    \end{enumerate}
\end{remark}
\color{black}
\section{Results and Numerical Results}
This section investigates the statistical functions of legitimate and wiretap channels, as well as the SOP behavior for the system \textcolor{black}{under} consideration. The simulation was conducted using the Monte Carlo technique, generating $7\times10^{6}$ random $\alpha-\mu$ samples with the following set of parameters: $G_{r,\ell}=G_{r,t}=G_{t}=40\mathrm{\ dB},f=300\mathrm{\ GHz},d_{\mathrm{1}}=5\mathrm{\ m},d_{\mathrm{r,\ell}}=20\mathrm{\ m},\mathcal{P}/\sigma^2=60\mathrm{\ dB},\kappa_{a}(f)=3.18\times10^{-4}\mathrm{\ m}^{-1},\varphi=25.7404,$
and $A_{0}=0.054$.
Fig. \ref{CDFhl22} shows the CDF of $h_{\ell}^{2}$ for $\overline{h}=1.5$
and different values of $N,\alpha,$ and $\mu$, plotted using \eqref{eq:CDFhl}.
The infinite series was shortened to $10$ and $15$ initial
terms, respectively. The analytical and simulation curves are identical
for each configuration, demonstrating the exceptional accuracy of
the obtained analytical results. Further, increasing $N,\alpha,$
and $\mu$ leads to an increase in $K_{\max}$ to ensure a tiny error,
which supports \textbf{Remark 1.}

\begin{figure*}[ht]

\begin{minipage}{0.3\linewidth}
\centering  % redundant
\hspace*{-.2cm}\includegraphics[scale=.1475]{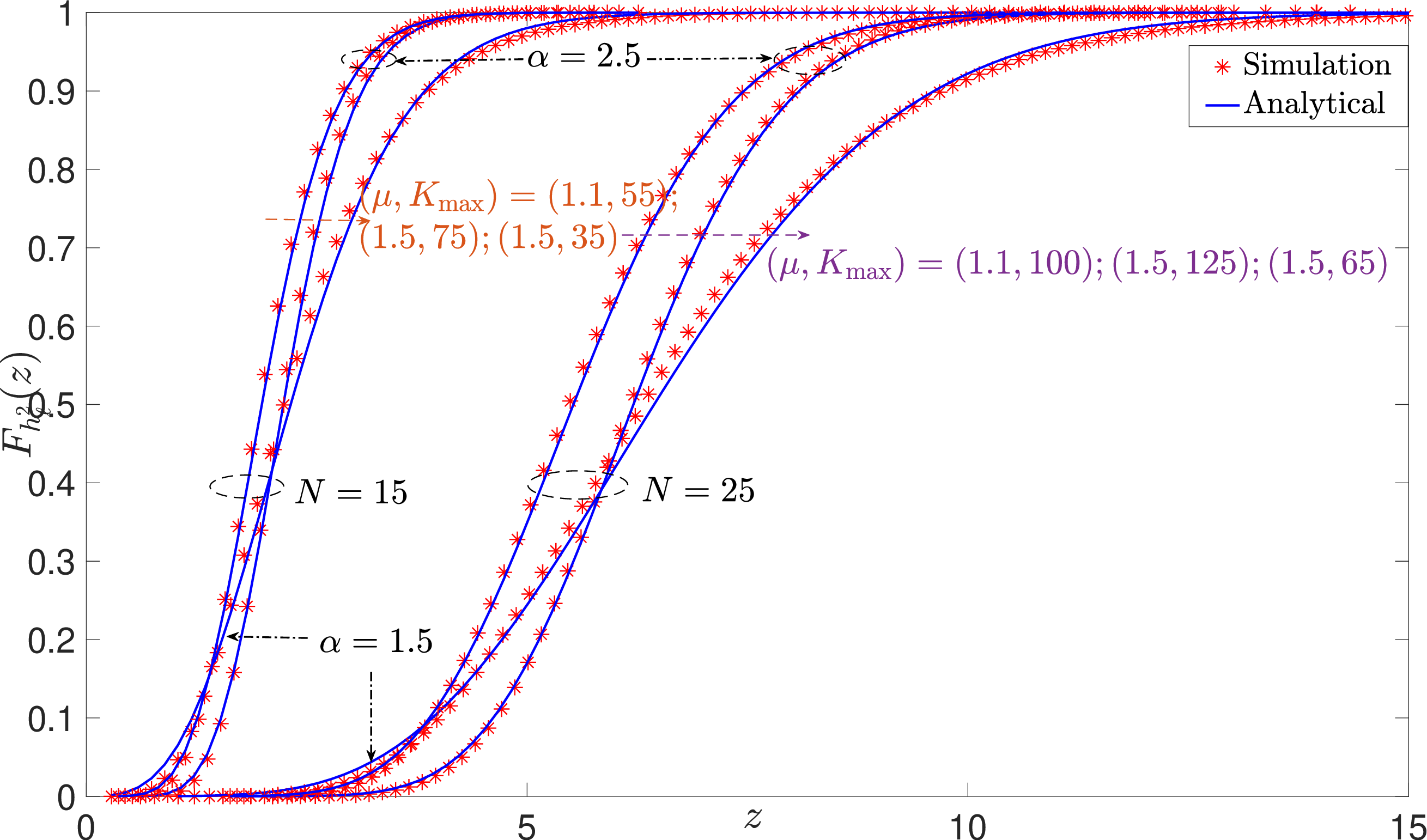} 	
\caption{Simulated and approximate $F_{h_{\ell}^{2}}\left(z\right)$ for $\overline{h}=1.5$ and  various value of $\alpha,\mu,$ and $N$.}
\label{CDFhl22}
\end{minipage}%
\hfill% not: "\hspace{0.5cm}"
\begin{minipage}{0.3\linewidth}
\centering  % redundant
\hspace*{-.2cm}\includegraphics[scale=.1475]{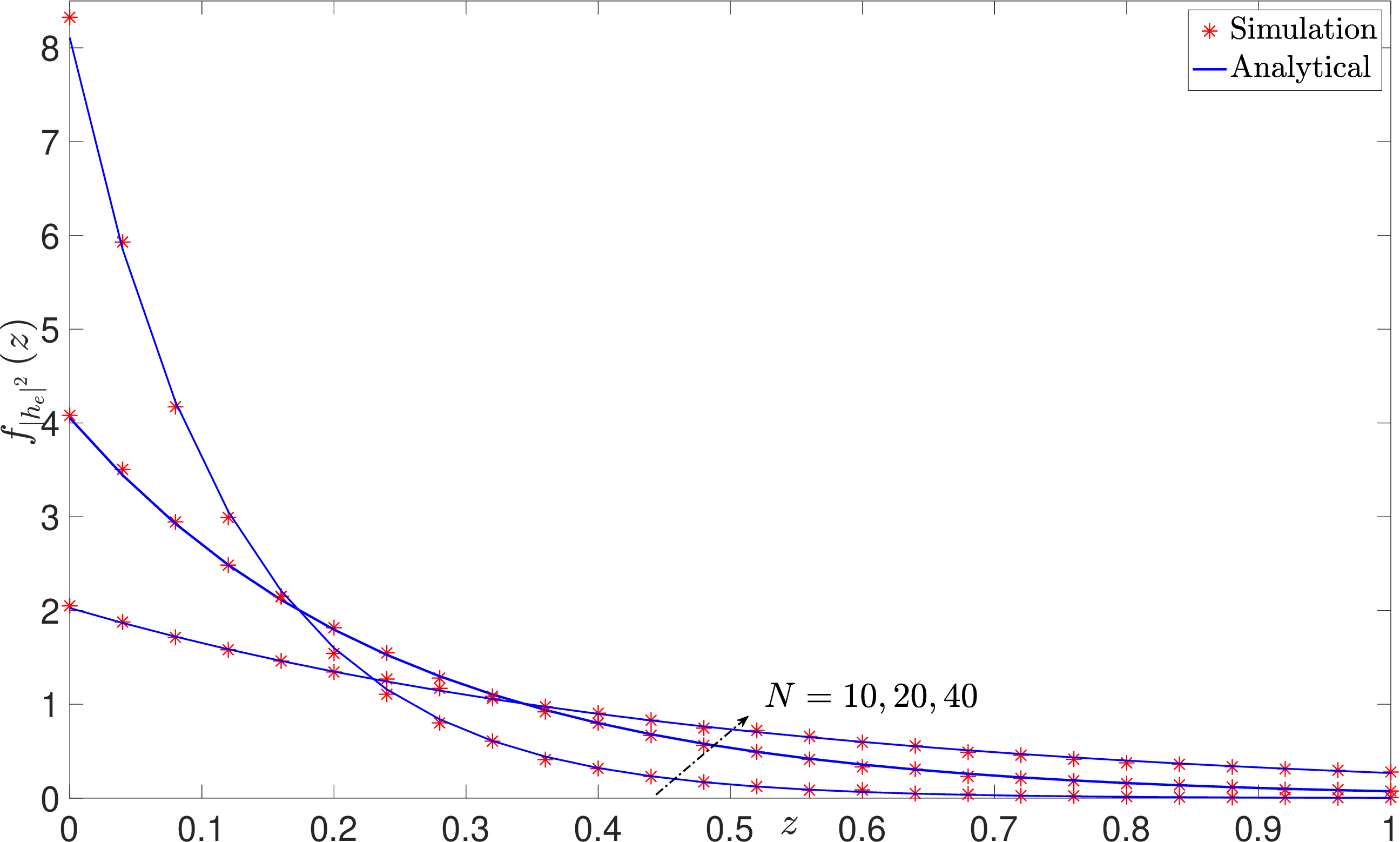} 	
\caption{Simulated and approximate $f_{\left\vert h_{e}\right\vert ^{2}}\left(z\right)$ for $\alpha=2.5,\mu=1.5,\overline{h}=1.5$ and different value of $N$.}
\label{PDFhe2}
\end{minipage}
\hfill% not: "\hspace{0.5cm}"
\begin{minipage}{0.3\linewidth}
\centering  % redundant
\vspace*{-.2cm}
\hspace*{-.2cm}\includegraphics[scale=.1475]{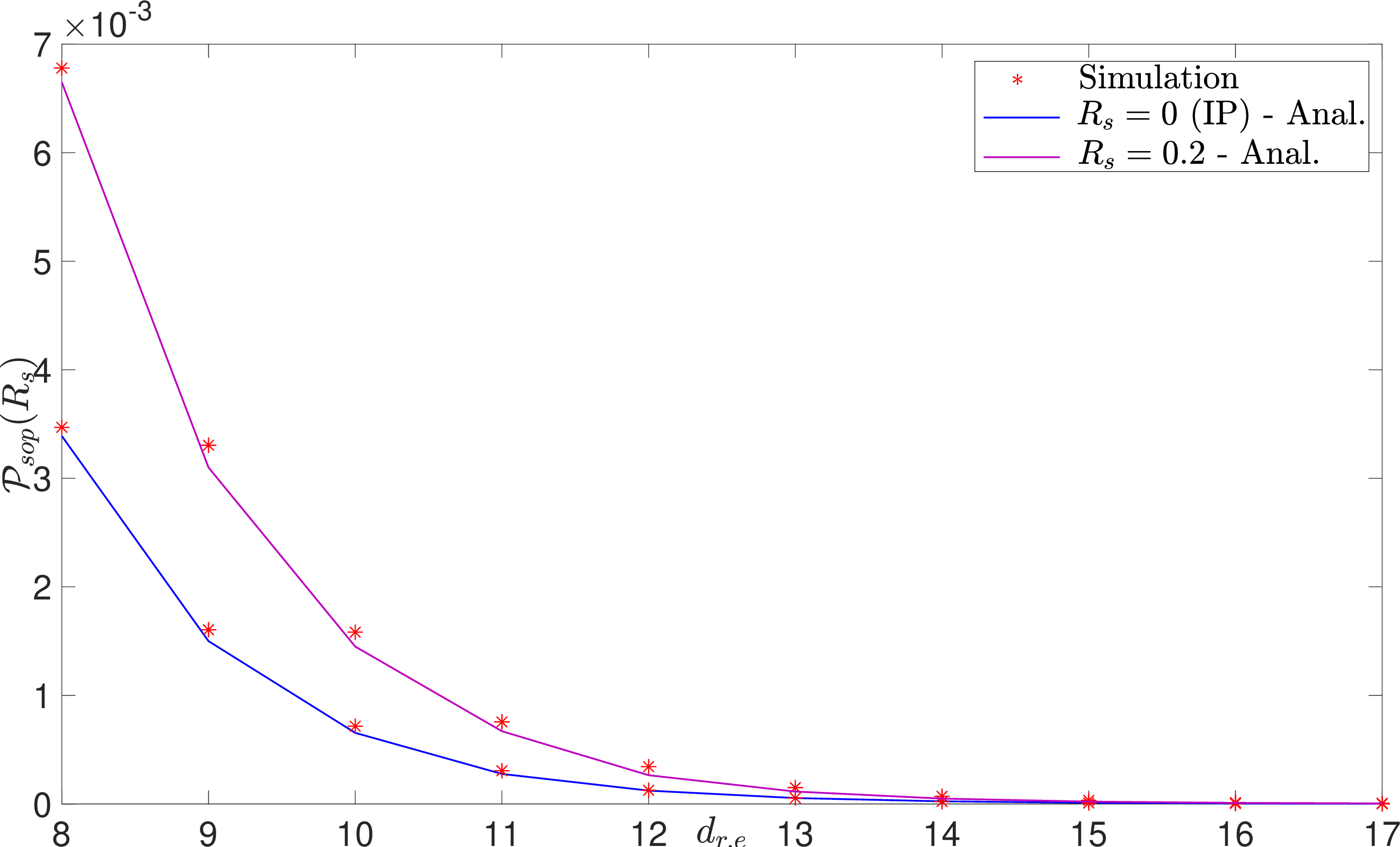}
\caption{\textcolor{black}{Simulated and approximate $\mathcal{P}_{sop}$ vs $d_{r,e}$ for $\overline{h}=1,N=60,$ and $\mathcal{P}/\sigma_{x}^{2}=60$ dB.}}
		\label{IPSOPFig} 
\end{minipage}%
\end{figure*}
\begin{figure}
    \centering
   \includegraphics[scale=.1475]{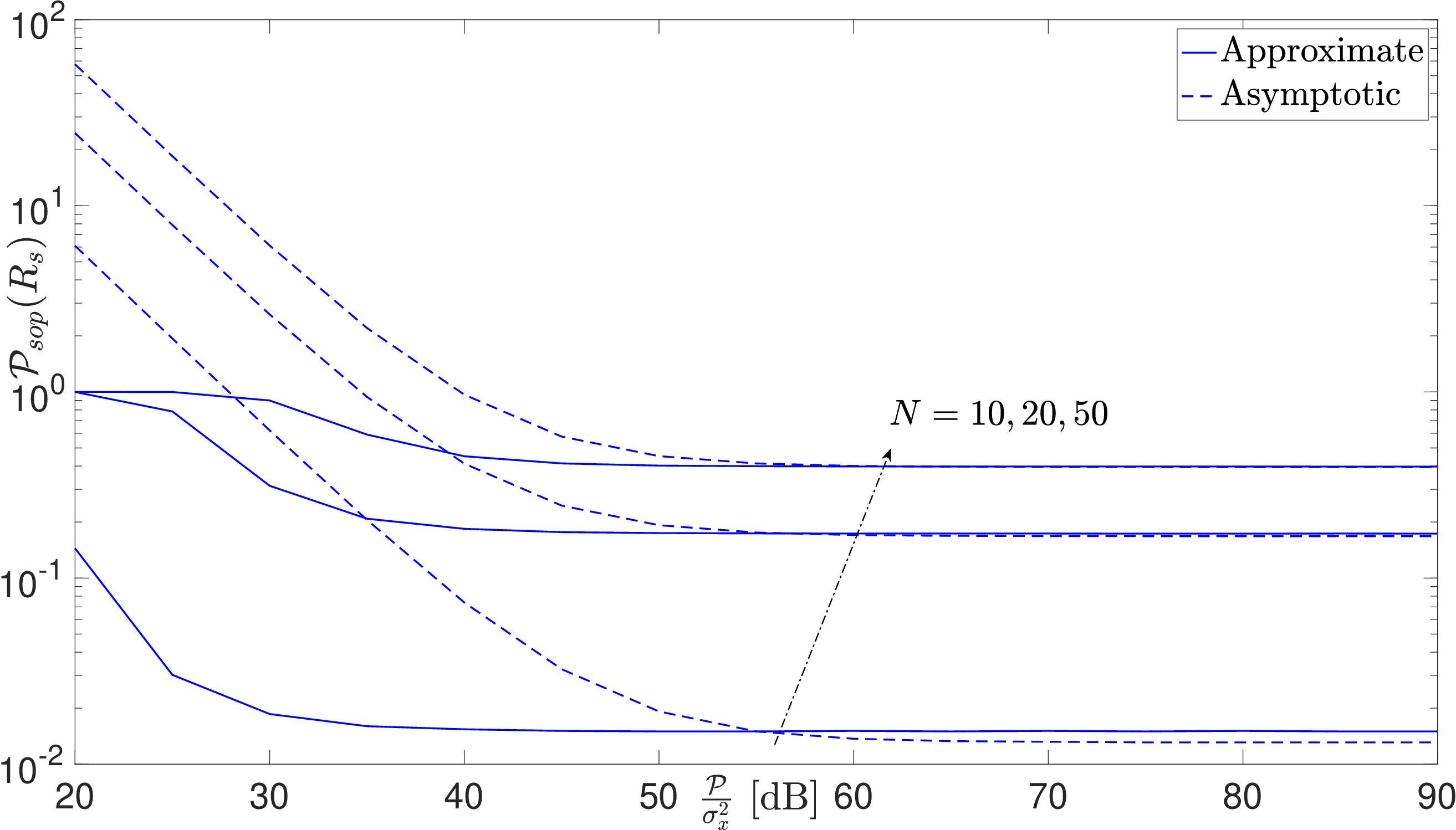} 	
\caption{\textcolor{black}{Aymptotic SOP for   $\alpha=1.7,\mu=1.1,R_s=0.2,d_{r,e}=8$ m, and various values of $N$.}}
    \label{fig:enter-label}
\end{figure}
Fig. \ref{PDFhe2} depicts the PDF of $\left\vert h_{e}\right\vert^{2}$ for several values of $N$. Similarly, the simulation curves validate our finding and prove the approximation tightness for $N\ge 10$. Fig. \ref{IPSOPFig} presents the IP $(R_s=0)$ and SOP $(R_s=0.2)$ for $\alpha=1.7,\mu=1.1,\textcolor{black}{K_{\max}=0},d_1=5\text{\ m}$, and $d_{r,\ell}=25\text{\ m}$. Analytical curves almost match the simulated ones. Further,  the infinite series in \eqref{SOP} is neglected to the first term\textcolor{black}{, justifying} the choice of $K_{\max}=0$ and that the SOP is proportional to $R_s$. \textcolor{black}{Besides, the greater $d_{r,e}$, the smaller $h_{l,e},\mathcal{M},$ and $\gamma_{e}$. Thus, the SOP is decreasing on $d_{r,e}$ according to Eq. \eqref{SOP0}. Further, for high SNR values, $\mathcal{L}$ tends to $0$ and the SOP approaches IP, corroborating \textbf{Remark 4}.  The asymptotic SOP presented in Fig. 4, corroborates the accuracy of Eq. \eqref{SOPasympt0} and \textbf{Remark 3.B.}}\vspace{-0.35cm}
\section{\color{black}Conclusion}	
\color{black}
A RIS-assisted THz WCS subject to $\alpha-\mu$ with pointing errors has been considered in the presence of a wiretapper. RIS phase shifts 
were tuned to maximize the SNR at the legitimate receiver. Simple closed-form approximation formulas for the CDF and PDF of the channel gains were obtained, from which the SOP and IP were determined for THz communication systems subject to $\alpha-\mu$ fading and PEs. Finally, the simulation findings corroborate the theoretical analysis. \textcolor{black}{Future work may consider imperfect CSI and hardward impairments, though they pose analytical challenges under $\alpha$–$\mu$ fading with PEs.}
\color{black}\vspace{-0.3cm}
\section*{Appendix A\protect \protect \protect \\
		Proof of Proposition 1}
 Let's first find the distribution of $h_{f,\ell}$ given in \eqref{eq:hfl} to deduce that of $h_{\ell}$ using \eqref{hx} along with \eqref{eq:PDFhpx2}.\vspace{-0.3cm}
\subsection*{Approximate CDF of $h_{f,\ell}^{2}$}
 Given the independence of the RVs $\left|h_{n}\right|$ and $\left|g_{n,\ell}\right|$, it follows using the central limit theorem (CLT) that $h_{f,\ell}$ defined in \eqref{eq:hfl} is a normal distribution with the two first moments are expressed as\vspace{-0.25cm} 
 \begin{equation}  \mathbb{E}\left[h_{f,\ell}\right]=N\mathcal{H}_{1}^{2},\label{mom1}
 \end{equation}\vspace{-0.35cm}
 \begin{equation}  \mathbb{E}\left[h_{f,\ell}^{2}\right]=N\mathcal{H}_{2}^{2}+N\left(N-1\right)\mathcal{H}_{1}^{4},\label{mom2}
 \end{equation}
 where $\mathcal{H}_{i}$ denote the $i$th moment of $\alpha-\mu$ RV, i.e., $\mathcal{H}_{i}\triangleq\mathbb{E}\left[Y_{n}^{i}\right]$, and is given in \eqref{Hi}. Given that $h_{f,\ell}$ is a positive RV, its PDF and CDF can be approximated using \eqref{mom1}-\eqref{mom2} as\vspace{-0.15cm} 
 \begin{equation}
f_{h_{f,\ell}}\left(z\right)	\approx \frac{1}{\sqrt{2\pi\Psi_{N}}}\exp\left(-\frac{\left(z-\sqrt{\mathcal{G}_{N}}\right)^{2}}{2\Psi_{N}}\right);z\ge 0,
\end{equation}\vspace{-0.35cm}
\begin{equation}
F_{h_{f,\ell}}\left(z\right)\approx 1-Q\left(\frac{z-\sqrt{\mathcal{G}_{N}}}{\sqrt{\Psi_N}}\right);z\ge0,\label{CDFhflll}\end{equation}\vspace{-0.1cm}
where $\Psi_{N}$ and $\mathcal{G}_{N}$ are being defined in \eqref{PsiN} and \eqref{GN}.\vspace{-0.1cm}
\subsection*{Approximate CDF of $h_{\ell}^{2}$}
 Relying on \eqref{hx}, the CDF of $h_{\ell}^{2}$ can be expressed as \vspace{-0.25cm}
 \begin{align}   F_{h_{\ell}^{2}}\left(z\right)&\triangleq\text{Pr}\left(h_{p,\ell}<\frac{\sqrt{z}}{h_{f,\ell}}\right)\nonumber\\&=\int_{0}^{\infty}F_{h_{p,\ell}}\left(\frac{\sqrt{z}}{y}\right)f_{h_{f,\ell}}\left(y\right)dy\nonumber\\&\overset{(a)}{=}\int_{\frac{\sqrt{z}}{A_{0}}}^{\infty}\left(\frac{\sqrt{z}}{A_{0}y}\right)^{\varphi}f_{h_{f,\ell}}\left(y\right)dy+\int_{0}^{\frac{\sqrt{z}}{A_{0}}}f_{h_{f,\ell}}\left(y\right)dy\nonumber\\&\overset{(b)}{\approx}\underset{\triangleq\mathcal{C}(z)}{\underbrace{\frac{1}{\sqrt{2\pi\Psi_{N}}}\int_{\frac{\sqrt{z}}{A_{0}}}^{\infty}\exp\left(-\frac{\left(y-\sqrt{\mathcal{G}_{N}}\right)^{2}}{2\Psi_N}\right)\left(\frac{\sqrt{z}}{A_{0}y}\right)^{\varphi}dy}}\nonumber\\&+1-Q\left(\frac{\frac{\sqrt{z}}{A_{0}}-\sqrt{\mathcal{G}_{N}}}{\sqrt{\Psi_N}}\right),\label{CDFhl222}
 \end{align}%\vspace{-.15cm}
where steps (a) and (b) hold using \eqref{CDFhp2} and \eqref{CDFhflll}.
Further, using the Binomial Theorem along with the Maclaurin series and employing \cite[Eq. (3.381.9)]{gradshteyn2007}, one gets \vspace{-0.25cm}
\begin{align}
    \mathcal{C}(z)	&=\frac{\exp\left(-\frac{\mathcal{G}_{N}}{2\Psi_{N}}\right)}{\sqrt{2\pi\Psi_{N}}A_{0}^{\varphi}}z^{\frac{\varphi}{2}}\sum_{k=0}^{\infty}\frac{1}{k!}\left(\frac{\sqrt{\mathcal{G}_{N}}}{\Psi_{N}}\right)^{k}\nonumber\\&
	\times\underset{=2\left(2\Psi_{N}\right)^{-\frac{k-\varphi+1}{2}}\Gamma\left(\frac{k-\varphi+1}{2},\frac{z}{2\Psi_{N}A_{0}^{2}}\right)}{\underbrace{\int_{\frac{\sqrt{z}}{A_{0}}}^{\infty}y^{k-\varphi}\exp\left(-\frac{y^{2}}{2\Psi_{N}}\right)dy}}.\label{Tk}
\end{align}%\vspace{-.25cm}
This concludes the proof of Proposition 1.\vspace{-0.5cm}
\section*{Appendix B\protect \protect \protect \\
		Proof of Corollary 1}
  To prove the absolute convergence of the infinite series in \eqref{eq:CDFhl}, it is sufficient to check that  $\mathcal{T}_{k+1}(z)/\mathcal{T}_{k}(z)$ falls bellow $1$ above certain threshold of $k$.
  It can be verified that\vspace{-0.15cm}
  \begin{equation}
      \frac{\mathcal{T}_{k+1}(z)}{\mathcal{T}_{k}(z)}=\frac{1}{k+1}\left(\frac{2\mathcal{G}_{N}}{\Psi_{N}}\right)^{\frac{1}{2}}\underset{\triangleq\mathcal{U}_{k}(z)}{\underbrace{\frac{\Gamma\left(\frac{k-\varphi+1}{2},\frac{z}{2\Psi_{N}A_{0}^{2}}\right)}{\Gamma\left(\frac{k+1-\varphi+1}{2},\frac{z}{2\Psi_{N}A_{0}^{2}}\right)}}}.
  \end{equation}\vspace{-0.1cm}
  Further, using the two inequalities $\Gamma\left(a,x\right)\geq x^{a-1}e^{-x}$ and $\Gamma\left(a+\frac{1}{2},x\right)\leq \frac{x^{a+\frac{1}{2}}e^{-x}}{\left(x-\left(a+\frac{1}{2}-1\right)\ln x\right)}$ for significant higher values of $a$ \cite{Borwein2009UniformBF}, one gets\vspace{-0.15cm}
  \begin{equation}
      \mathcal{U}_{k}(z)\leq\frac{z_{N}}{\left|z_{N}-\left(\frac{k-\varphi}{2}\right)\ln z_{N}\right|}.
  \end{equation}\vspace{-.08cm}
  % It follows that 
  % \begin{equation}
  %     \frac{\mathcal{T}_{k+1}(z)}{\mathcal{T}_{k}(z)}	\leq\frac{2}{k+1}\left(\frac{\mathcal{G}_{N}A_{0}^{2}}{z}\right)^{\frac{1}{2}},
  % \end{equation}
 Subsequently $\underset{k\rightarrow\infty}{\text{lim}}\frac{\mathcal{T}_{k+1}(z)}{\mathcal{T}_{k}(z)}=0$.
  On the other hand, the error of Maclaurin truncation can be expressed by evaluating the $(K_{\max}+1)$th derivative of $\exp\left(y\sqrt{\mathcal{G}_{N}}/\Psi_{N}\right)$ as\vspace{-0.15cm}
  \begin{align}
      \epsilon_{K_{\max}}(z)&=\frac{\exp\left(-\frac{\mathcal{G}_{N}}{2\Psi_{N}}\right)z^{\frac{\varphi}{2}}}{\sqrt{2\pi\Psi_{N}}A_{0}^{\varphi}\left(K_{\max}+1\right)!}\int_{\frac{\sqrt{z}}{A_{0}}}^{\infty}y^{-\varphi}\exp\left(-\frac{y^{2}}{2\Psi_{N}}\right)\nonumber\\&\times \left(\frac{t_{y}\sqrt{\mathcal{G}_{N}}}{\Psi_{N}}\right)^{K_{\max}+1}\exp\left(\frac{t_{y}\sqrt{\mathcal{G}_{N}}}{\Psi_{N}}\right)dy,
  \end{align}
  where $t_{y}\leq y$. Using this property along with $y^{-\varphi}\le\left(\sqrt{z}/A_{0}\right)^{-\varphi}$ and performing some mathematical operations, one gets\vspace{-0.25cm} 
  \begin{equation}
      \epsilon_{K_{\max}}(z)\leq  
 \frac{\mathcal{G}_{N}^{\frac{K_{\max}+1}{2}}\Psi_{N}^{-\frac{K_{\max}+1}{2}}}{\sqrt{2\pi}\left(K_{\max}+1\right)!}\underset{\triangleq\mathcal{J_{K_{\max}}}(z)}{\underbrace{\int_{\mathcal{U}(z)}^{\infty}\frac{\left(y+\sqrt{\frac{\mathcal{G}_{N}}{\Psi_{N}}}\right)^{K_{\max}+1}}{\exp\left(\frac{y^{2}}{2}\right)}dy}}.
  \end{equation}
where $\mathcal{U}(z)=\left(\sqrt{z}/A_{0}-\sqrt{\mathcal{G}_{N}}\right)/\sqrt{\Psi_{N}}$. It can be seen that\vspace{-0.25cm}
\[
\mathcal{J_{K_{\max}}}(z)\leq\left\{ \begin{array}{c}
\mathcal{V}_{K_{\max}},\mathcal{U}(z)\geq0\\
2\mathcal{V}_{K_{\max}},\mathcal{U}(z)<0
\end{array}\right.,
\]\vspace{-0.25cm}
where 
\begin{equation}  \mathcal{V}_{K_{\max}}=\sum_{i=0}^{K_{\max}+1}\frac{\left(K_{\max}+1\right)!2^{-\frac{i+1}{2}}}{\Gamma\left(\frac{i}{2}+1\right)\pi^{-\frac{1}{2}}\left(K_{\max}+1-i\right)!}\left(\frac{\mathcal{G}_{N}}{\Psi_{N}}\right)^{\frac{K_{\max}+1-i}{2}}.
\end{equation}
 Lastly, using the multinomial Theorem alongside \cite[Eq. (3.381.10)]{gradshteyn2007}
and the identity $i!=2^{i}\pi^{-\frac{1}{2}}\Gamma\left(\frac{i+1}{2}\right)\Gamma\left(\frac{i}{2}+1\right),$ it can be checked that\vspace{-0.2cm} 
\begin{equation}
   \frac{\mathcal{G}_{N}^{\frac{K_{\max}+1}{2}}\Psi_{N}^{-\frac{K_{\max}+1}{2}}}{\sqrt{2\pi}\left(K_{\max}+1\right)!}\times \mathcal{V}_{K_{\max}}=\mathcal{\mathcal{W}_{K_{\max}}}, 
\end{equation}
being defined in \eqref{Wkmax}. This completes the proof.\vspace{-.3cm}
  \section*{Appendix C\protect \protect \protect \\
		Proof of Proposition 2}
  Similarly to $h_{f,\ell}$, it can be seen by utilizing the CLT that $h_{f,e}^{(c)}$ and $h_{f,e}^{(s)}$ defined in \eqref{hfec}-\eqref{hfes} can be approximated by two normal distributions of zero-mean and same variance  as $
\mathbb{E}\left[\cos\left(\phi_{n}\right)\right]=\mathbb{E}\left[\sin\left(\phi_{n}\right)\right]=0$ and 
$\mathbb{E}\left[\left(h_{f,e}^{(c)}\right)^{2}\right]=\mathbb{E}\left[\left(h_{f,e}^{(s)}\right)^{2}\right]=\frac{N}{2}\mathcal{H}_{2}^{2}$. Thus, $\left|h_{f,e}\right|^{2}$ is a exponentially distributed of CDF\vspace{-0.15cm} 
\begin{equation} 
F_{\left|h_{f,e}\right|^{2}}\left(z\right)\approx1-\exp\left(-\frac{z}{N\mathcal{H}_{2}^{2}}\right).
\end{equation}\vspace{-.15cm}
Similarly to \eqref{CDFhl222}, the CDF of $\left\vert h_{e}\right\vert^{2}$ can be then evaluated as
\begin{align}
F_{\left\vert h_{e}\right\vert ^{2}}\left(z\right) & =\int_{0}^{A_{0}^{2}}F_{\left\vert h_{f,e}\right\vert ^{2}}\left(\frac{z}{y}\right)f_{ h_{p,e} ^{2}}\left(y\right)dy\nonumber \\
 & \overset{(a)}{\approx}1-\frac{\varphi}{2A_{0}^{\varphi}}\int_{\frac{1}{A_{0}^{2}}}^{\infty}\exp\left(-\frac{z}{N\mathcal{H}_{2}^{2}}t\right)t^{-\frac{\varphi}{2}-1}dt\nonumber \\
 & \overset{(b)}{\approx}1-\frac{\varphi}{2A_{0}^{\varphi}}\left(\frac{z}{N\mathcal{H}_{2}^{2}}\right)^{\frac{\varphi}{2}}\Gamma\left(-\frac{\varphi}{2},\frac{z}{N\mathcal{H}_{2}^{2}A_{0}^{2}}\right),
\end{align}%\vspace{-.05cm}
where steps (a) and (b) follow using the change of variable $y=1/t$
and \cite[Eq. (3.381.3)]{gradshteyn2007}, respectively. Subsequently, if $z\neq 0$, its PDF can be expressed as \vspace{-.05cm} 
\begin{align}
f_{\left\vert h_{e}\right\vert ^{2}}\left(z\right) & \approx \frac{\varphi}{2A_{0}^{\varphi}N\mathcal{H}_{2}^{2}}\int_{\frac{1}{A_{0}^{2}}}^{\infty}\exp\left(-\frac{z}{N\mathcal{H}_{2}^{2}}t\right)t^{-\frac{\varphi}{2}}dt,
\end{align}\vspace{-.05cm}
whereas the expression becomes for $z=0$ \vspace{-.05cm}
\begin{equation}
    f_{\left\vert h_{e}\right\vert ^{2}}\left(z\right)	\approx\frac{\varphi}{2A_{0}^{\varphi}N\mathcal{H}_{2}^{2}}\int_{\frac{1}{A_{0}^{2}}}^{\infty}t^{-\frac{\varphi}{2}}dt
	=\frac{\varphi}{A_{0}^{2}N\mathcal{H}_{2}^{2}\left(\varphi-2\right)}.
\end{equation}
Finally, \eqref{pdfhe222} is achieved by employing \cite[Eq. (3.381.3)]{gradshteyn2007}.\vspace{-.25cm}
\section*{Appendix D\protect \protect \protect \\
		Proof of Proposition 3}
  For a fixed $R_s$, the SOP is expressed using \eqref{eq:SNR} as\vspace{-.25cm}
  \begin{align}
\mathcal{P}_{\text{sop}}(R_s) & =\text{Pr}\left(\gamma_{\ell}\le2^{R_{s}}\left(\gamma_{e}+1\right)-1\right)\nonumber\\
 & =\text{Pr}\left(h_{\ell}^{2}\le\underset{\triangleq Z_{e}}{\underbrace{g\left(\left|h_{e}\right|^{2}\right)}}\right),\label{SOP0}
\end{align}\vspace{-.05cm}
where $g\left(z\right)=\mathcal{M}z+\mathcal{L}$ with
 $\mathcal{M}$ and $\mathcal{L}$ are being defined in Proposition 3.\vspace{-.35cm}
\subsection{Case 1: $R_s>0$}
In this case, we have\vspace{-.25cm}
\begin{equation}
     \mathcal{P}_{\text{sop}}(R_s) {=}\int_{0}^{\frac{1}{\mathcal{L}}}\frac{1}{z^{2}}F_{h_{\ell}^{2}}\left(\frac{1}{z}\right)f_{Z_{e}}\left(\frac{1}{z}\right)dz,\label{SOP00}
\end{equation}\vspace{-.45cm}
with\vspace{-.25cm}
\begin{align}
f_{Z_{e}}\left(\frac{1}{z}\right) & =\frac{1}{\mathcal{M}}f_{\left|h_{e}\right|^{2}}\left(\frac{\frac{1}{z}-\mathcal{L}}{\mathcal{M}}\right),z<\frac{1}{\mathcal{L}}.\label{Ze}
\end{align}
Substituting \eqref{eq:CDFhl} and \eqref{Ze} along with \eqref{pdfhe222} into \eqref{SOP00}, one obtains\vspace{-.25cm}
\begin{equation}
    \mathcal{P}_{\text{sop}}(R_s)=1-\left[\begin{array}{c}
\frac{1}{\mathcal{M}}\int_{0}^{\frac{1}{\mathcal{L}}}\mathcal{D}_{2}\left(\frac{1}{z}\right)dz-\frac{\left(2\Psi_{N}A_{0}^{2}\right)^{-\frac{\varphi}{2}}}{2\sqrt{\pi}\mathcal{M}}\\
\times\exp\left(-\frac{\mathcal{G}_{N}}{2\Psi_{N}}\right)\sum_{k=0}^{\infty}\int_{0}^{\frac{1}{\mathcal{L}}}\mathcal{D}_{1,k}\left(\frac{1}{z}\right)dz
\end{array}\right],\label{SOP1}
\end{equation}
where the functions $\mathcal{D_{\cdot}}$ are defined in \eqref{D1k}-\eqref{D2}. Finally, using the alternative extended Simpson's rule to the two inner integrals in \eqref{SOP1}, \eqref{SOP} for $\mathcal{L}>0$ is obtained.\vspace{-.45cm}
 \subsection{Case 2: $R_s=0$}
 \textcolor{black}{In} this case, the IP can be evaluated as\vspace{-.25cm} 
 \begin{align}
     \mathcal{P}_{\text{sop}}(0)&=\frac{1}{\mathcal{M}}\int_{0}^{\infty}F_{h_{\ell}^{2}}\left(z\right)f_{\left|h_{e}\right|^{2}}\left(\frac{z}{\mathcal{M}}\right)dz\nonumber\\
	&=1-\frac{1}{\mathcal{M}}\left[\begin{array}{c}
\int_{0}^{\infty}Q\left(\frac{\frac{\sqrt{z}}{A_{0}}-\sqrt{\mathcal{G}_{N}}}{\sqrt{\Psi_{N}}}\right)f_{\left|h_{e}\right|^{2}}\left(\frac{z}{\mathcal{M}}\right)dz\\
-\mathcal{B}_{N}\sum_{k=0}^{\infty}\int_{0}^{\infty}\mathcal{T}_{k}(z)f_{\left|h_{e}\right|^{2}}\left(\frac{z}{\mathcal{M}}\right)dz
\end{array}\right],
\end{align}
which leads to \eqref{SOP}, for $\mathcal{L}=0$, using Gauss-Laguerre \textcolor{black}{(GL)} quadrature.\vspace{-.25cm}
\color{black}
\section*{Appendix E\protect \protect \protect \\
		Proof of Proposition 4}
        When $P/\sigma_{x}^{2}\rightarrow\infty$, $\mathcal{L}$ approaches
$0$. Further, for equal value of $\sigma_{x}^{2}$ at nodes $\ell$
and $e$, $\mathcal{L}$ is the only parameter impacted by the SNR
value. Using Eqs. \eqref{SOP00}-\eqref{Ze}, the SOP can be asymptotically expressed,
using the integral's Taylor expansion at the first order, as 
%\begin{figure*}
    \begin{align}\color{black}
    \mathcal{P}_{\text{sop}}(R_{s})	&=\frac{1}{\mathcal{M}}\int_{\mathcal{L}}^{\infty}F_{h_{\ell}^{2}}\left(z\right)f_{\left|h_{e}\right|^{2}}\left(\frac{z-\mathcal{L}}{\mathcal{M}}\right)dz
	\nonumber \\&\sim\frac{1}{\mathcal{M}}\int_{0}^{\infty}F_{h_{\ell}^{2}}\left(z\right)f_{\left|h_{e}\right|^{2}}\left(\frac{z}{\mathcal{M}}\right)dz
	\nonumber \\&-\frac{\mathcal{L}}{\mathcal{M}}\left(\begin{array}{c}
F_{h_{\ell}^{2}}\left(0\right)f_{\left|h_{e}\right|^{2}}\left(0\right)\\
+\frac{1}{\mathcal{M}}\int_{0}^{\infty}F_{h_{\ell}^{2}}\left(z\right)f'_{\left|h_{e}\right|^{2}}\left(\frac{z}{\mathcal{M}}\right)dz
\end{array}\right).\label{eq:PSOPLetter}
% \mathcal{P}_{\text{sop}}(R_{s}) & =\frac{1}{\mathcal{M}}\int_{\mathcal{L}}^{\infty}F_{h_{\ell}^{2}}\left(z\right)f_{\left|h_{e}\right|^{2}}\left(\frac{z-\mathcal{L}}{\mathcal{M}}\right)dz\nonumber \\
%  & \sim\frac{1}{\mathcal{M}}\int_{0}^{\infty}F_{h_{\ell}^{2}}\left(z\right)f_{\left|h_{e}\right|^{2}}\left(\frac{z}{\mathcal{M}}\right)dz-\frac{\mathcal{L}}{\mathcal{M}}\left(F_{h_{\ell}^{2}}\left(0\right)f_{\left|h_{e}\right|^{2}}\left(0\right)+\frac{1}{\mathcal{M}}\int_{0}^{\infty}F_{h_{\ell}^{2}}\left(z\right)f'_{\left|h_{e}\right|^{2}}\left(\frac{z}{\mathcal{M}}\right)dz\right).\label{eq:PSOPLetter}
\end{align}
%\hrule
%\end{figure*}
Now, using Eq. \eqref{eq:CDFhl} along with the property $Q(x)=1-Q(-x)$, we get
\begin{equation}
F_{h_{\ell}^{2}}\left(0\right)\approx Q\left(\sqrt{\frac{\mathcal{G}_{N}}{\Psi_{N}}}\right).
\end{equation}
Subsequently, \eqref{eq:PSOPLetter} can be rewritten as
\begin{align}
\mathcal{P}_{\text{sop}}(R_{s}) & \sim\int_{0}^{\infty}\mathcal{K}\left(z\right)dz-\mathcal{L}\mathcal{O}_{N},\label{eq:SOP3}
\end{align}
where $\mathcal{O}_{N}$ is being defined in \eqref{ON} and $\mathcal{K}\left(\cdot\right)$
is expressed as 
\begin{equation}
\mathcal{K}\left(z\right)=F_{h_{\ell}^{2}}\left(\mathcal{M}z\right)\left(f_{\left|h_{e}\right|^{2}}\left(z\right)-\mathcal{L}f'_{\left|h_{e}\right|^{2}}\left(z\right)\right),\label{eq:Kz2}
\end{equation}
Using \eqref{pdfhe222}, the following holds 
\begin{align}
f'_{\left|h_{e}\right|^{2}}\left(z\right) & =\left[\left(\frac{\varphi}{2}-1\right)z^{-1}f_{\left|h_{e}\right|^{2}}\left(z\right)-\frac{\varphi z^{-1}}{2N\mathcal{H}_{2}^{2}A_{0}^{2}}\exp\left(-\frac{z}{N\mathcal{H}_{2}^{2}A_{0}^{2}}\right)\right]\label{eq:PDFprime}
\end{align}
Finally, substituting \eqref{eq:PDFprime} in \eqref{eq:Kz2}, \eqref{Kz}
is achieved. Further, by approximating the integral in \eqref{eq:SOP3}
with GL method, \eqref{SOPasympt0} is obtained. This concludes the proof of Proposition 4.
        \color{black}
\bibliographystyle{IEEEtran}
\bibliography{PaperFaissal}
\end{document}